\newcommand{\N}{\mathbb{N}}
\newcommand{\A}{\mathcal{A}}
\newcommand{\ie}{\textit{i.e.}}
\newcommand{\etal}{\textit{et al.} }
\newcommand{\Waiting}{\ensuremath{\mathcal{W}}}
\newcommand{\WaitingGreedy}[1]{\ensuremath{\mathcal{WG}_{#1}}}
\newcommand{\cost}[2]{cost_{#1}({#2})}
\newcommand{\whp}{w.h.p. }
\newcommand{\DODA}{\mathcal{D}_\mathsf{ODA}}
\newcommand{\DODAOblivious}{\mathcal{D}_\mathsf{ODA}^{\emptyset}}
\newtheorem{lemma}{Lemma}
\newtheorem{theorem}{Theorem}
\newtheorem{corollary}{Corollary}
\begin{document}

\author{Quentin Bramas$^{1}$,
Toshimitsu Masuzawa$^{2}$ and 
S\'ebastien Tixeuil$^{1}$
}

\title{Distributed Online Data Aggregation \\in Dynamic Graphs\thanks{This work was performed within the Labex SMART supported by French state funds managed by the ANR within the Investissements d'Avenir programme under reference ANR-11-IDEX-0004-02.}}
\maketitle

\begin{center}
    $^{1}$ Sorbonne Universit\'es, UPMC Univ Paris 06, France\\
    $^{2}$ Osaka University, Japan
\end{center}

\begin{abstract}
    We consider the problem of aggregating data in a dynamic graph, that is, aggregating the data that originates from all nodes in the graph to a specific node, the sink.  We are interested in giving lower bounds for this problem, under different kinds of adversaries. 
    
    In our model, nodes are endowed with unlimited memory and unlimited computational power. Yet, we assume that communications between nodes are carried out with pairwise interactions, where nodes can exchange control information before deciding whether they transmit their data or not, given that each node is allowed to transmit its data at most once. When a node receives a data from a neighbor, the node may aggregate it with its own data.
    
    We consider three possible adversaries: the online adaptive adversary, the oblivious adversary, and the randomized adversary that chooses the pairwise interactions uniformly at random.
    For the online adaptive and the oblivious adversaries, we give impossibility results when nodes have no knowledge about the graph and are not aware of the future. Also, we give several tight bounds depending on the knowledge (be it topology related or time related) of the nodes.
    For the randomized adversary, we show that the Gathering algorithm, which always commands a node to transmit, is optimal if nodes have no knowledge at all. Also, we propose an algorithm called Waiting Greedy, where a node either waits or transmits depending on some parameter, that is optimal when each node knows its future pairwise interactions with the sink.
\end{abstract}

\section{Introduction}

Dynamic graphs, that is, graphs that evolve over time, can conveniently model dynamic networks, which recently received a lot of interest from the academic community (\emph{e.g.} mobile sensor networks, vehicular networks, disruption tolerant networks, interaction flows, etc.). 
Depending on the problem considered, various models were used: among others, static graphs can be used to represent a snapshot in time of a dynamic graph, functions can be used to define continuously when an edge appears over time, and sequences of tuples can represent atomic interactions between nodes over time.

The problem we consider in this paper assumes an arbitrary dynamic network, such as sensors deployed on a human body, cars evolving in a city that communicate with each other in an ad hoc manner, etc. We suppose that initially, each node in the network originates some data (\emph{e.g.} that originates from a sensor, or from computation), and that these data must be aggregated at some designated node, the \emph{sink}. To this goal, a node may send its data to a communication neighbor at a given time (the duration of this communication is supposed to be one time unit). We assume that there exists an aggregation function that takes two data as input and gives one data as output (the function is aggregating in the sense that the size of the output is supposed to be the same as a single input, such functions include $\min$, $\max$, etc.).
 
The main constraint for communications between nodes is that a node is allowed to send its data (be it its original data, or aggregated data) exactly once (\emph{e.g.} to keep energy consumption low). A direct consequence of this constraint is that a node must aggregate data anytime it receives some, provided it did not send its data previously. It also implies that a node cannot participate to the data aggregation protocol once it has transmitted its data. A nice property of any algorithm implementing this constraint is that the number of communications is minimum. The problem of aggregating all data at the sink with minimum duration is called the \emph{minimum data aggregation time problem}~\cite{bramas2015complexity}. 
The essence of such a data aggregation algorithm is to decide whether or not to send a node's data when encountering a given communication neighbor: by waiting, a node may be able to aggregate more data, while by sending a node disseminates data but excludes itself for the rest of the computation. 

In this paper, we consider that nodes may base their decision on their initial knowledge and past experience (past interactions with other nodes) only. Then, an algorithm accommodating those constraints is called an \emph{online distributed data aggregation} algorithm. The existence of such an algorithm is conditioned by the (dynamic) topology, initial knowledge of the nodes (\emph{e.g.} about their future communication neighbors), etc.

For simplicity, we assume that interactions between the nodes are carried out through pairwise operations. Anytime two nodes $a$ and $b$ are communication neighbors (or, for short, are interacting), either no data transfer happens, or one of them sends its data to the other, that executes the aggregation function on both its previously stored data and the received data, the output is then stored in the (new) stored data of the receiver. 
In the sequel, we use the term \emph{interaction} to refer to a pairwise interaction.

We assume that an adversary controls the dynamics of the network, that is, the adversary decides which are the interactions. As we consider atomic interactions, the adversary decides what sequence of interactions is to occur in a given execution. Then, the sequence of static graphs to form the evolving graph can be seen as a sequence of single edge graphs, where the edge denotes the interaction that is chosen by the scheduler at this particular moment. Hence, the time when an interaction occurs is exactly its index in the sequence. Our model of dynamic graphs as a sequence of interactions differs from existing models on several points. First, general models like  \textit{Time-varying-graph}~\cite{casteigts2011time} make use of continuous time, which adds a lot of complexity. Also, discrete time general models such as \emph{evolving graph}~\cite{casteigts2011time} capture the network evolution as a sequence of static graphs. Our model is a simplification of the evolving graph model where each static graph has a single edge. \textit{Population protocols}~\cite{angluin2007thecumputational} also consider pairwise interactions, but focus on finite state anonymous nodes with limited computational power and unlimited communication power (a given node can transmit its information many times), while we consider powerful nodes (that can record their past interactions) that are communication limited (they can send their data only once). Finally, \textit{Dynamic edge-relabeling}~\cite{casteigts2010srtuctural} is similar to population protocols, but the sequence of pairwise interactions occurs inside an evolving graph. This model shares the same differences as population protocols with our model.

 \subsection{Related Work}
The problem of data aggregation has been widely studied in the context of wireless sensor networks. The literature on this problem can be divided in two groups depending on the assumption made about the collisions being handled by an underlying MAC layer.
 
\textit{In the case when collisions are not handled by the MAC layer}, the goal is to find a collision-free schedule that aggregates the data in minimum duration. The problem was first studied by Annamalai \etal~\cite{annamalai2003tree}, and formally defined by Chen \etal~\cite{chen2005minimum}, which proved that the problem is NP-complete. Then, several papers~\cite{yu2009distributed, xu2011delay, ren2010new, nguyen2011efficient} proposed centralized and distributed approximation algorithms for this problem. The best known algorithm is due to Nguyen \etal~\cite{nguyen2011efficient}. More recently, Bramas \etal~\cite{bramas2015complexity} considered the generalization of the problem to dynamic wireless sensor networks (modeled by evolving graphs). Bramas \etal~\cite{bramas2015complexity} show that the problem remains NP-complete even when restricted to dynamic WSNs of degree at most $2$ (compared to $3$ in the static case).
  
\textit{When collisions are handled by the MAC layer}, various problems related to data aggregation have been investigated. The general term \emph{in-network aggregation} includes several problems such as gathering and routing information in WSNs, mostly in a practical way. For instance, a survey~\cite{fasolo2007network} relates aggregation functions, routing protocols, and MAC layers with the objective of reducing resource consumption. \emph{Continuous aggregation}~\cite{abshoff2014continuous} assumes that data have to be aggregated, and that the result of the aggregation is then disseminated to all participating nodes. The main metric is then the delay before aggregated data is delivered to all nodes, as no particular node plays the role of a sink. Most related to our concern is the work by Cornejo \etal~\cite{cornejo2012aggregation}. In their work, each node starts with a token, the time is finite and no particular node plays the role of a sink node. Then, the topology evolves with time, and at each time instant, a node has at most one neighbor with which it can interact and send or not its token. The goal is to minimize the number of nodes that own at least one token. Assuming an algorithm does not know the future, Cornejo \etal~\cite{cornejo2012aggregation} prove that its competitive ratio is $\Omega(n)$ with high probability (w.r.t. the optimal offline algorithm) against an oblivious adversary. 

\subsection{Our Contributions}
In this paper we define the problem of distributed online data aggregation in dynamic graphs, and study its complexity. It turns out that the problem difficulty strongly depends on the power of the adversary (that chooses which interactions occur in a given execution).

For the oblivious and the online adaptive adversaries, we give several impossibility results when nodes have no knowledge about the future evolution of the dynamic graph, nor about the topology. Also, when nodes are aware of the underlying graph (where an edge between two nodes exists if those nodes interact at least once in the execution), the data aggregation is impossible in general.
To examine the possibility cases, we define a cost function whose purpose is to compare the performance of a distributed online algorithm to the optimal offline algorithm for the same sequence of interactions. Our results show that if all interactions in the sequence occur infinitely often, there exists a distributed online data aggregation algorithm whose cost is finite. Moreover, if the underlying graph is a tree, we present an optimal algorithm.

For the randomized adversary, we first present tight bounds when nodes have full knowledge about the future interactions in the whole graph. In this case, the best possible algorithm terminates in $\Theta(n\log(n))$ interactions, in expectation and with high probability. 
Then, we consider nodes with restricted knowledge, and we present two optimal distributed online data aggregation algorithms that differ in the knowledge that is available to nodes. The first algorithm, called \emph{Gathering}, assumes nodes have no knowledge whatsoever, and terminates in $O(n^2)$ interactions on average, which we prove is optimal with no knowledge. The second one, called \emph{Waiting Greedy}, terminates in $O\left(n^{3/2}\sqrt{\log(n)}\right)$ interactions with high probability, which we show is optimal when each node only knows the time of its next interaction with the sink (the knowledge assumed by Waiting Greedy).

We believe our research paves the way for stimulating future researches, as our proof arguments present techniques and analysis that can be of independent interest for studying dynamic networks.


\section{Model}

A dynamic graph is modeled as a couple $(V, I)$, where $V$ is a set of nodes and $I = \left(I_t\right)_{t\in \N}$ is a sequence of pairwise interactions (or simply interactions). A special node in $V$ is the \emph{sink} node, and is denoted by $s$ in the sequel. In the sequence $\left(I_t\right)_{t\in \N}$, the index $t$ of an interaction also refers to its \emph{time of occurrence}. In the sequel $V$ always denotes the set of nodes, $n\geq 3$ its size, and $s\in V$ the sink node. 

In general, we consider that nodes in $V$ have unique identifiers, unlimited memory and unlimited computational power. However, we sometimes consider nodes with no persistent memory between interactions; those nodes are called \emph{oblivious}. 

Initially, each node in $V$ receives a data. During an interaction $I_t = \{u,v\}$, if both nodes still own data, then one of the node has the possibility to transmit its data to the other node. The receiver aggregates the received data with its own data. The transmission and the aggregation take exactly one time unit. If a node decides to transmit its data, then it does not own any data, and is not able to receive other's data anymore.

\subsection{Problem Statement}
The data aggregation problem consists in choosing at each interaction whether a node transmits (and which one) or not so that after a finite number of interactions, the sink is the only node that owns a data.
In this paper we study distributed and online algorithms that solve this problem. Such algorithms are called
\emph{distributed online data aggregation} (DODA) algorithms. 

A DODA is an algorithm that takes as input an interaction $I_t = \{u,v\}$, and its time of occurrence $t\in \N$, and outputs either $u$, $v$ or $\bot$. If a DODA outputs a node, this node is the receiver of the other node's data. In more details, if $u$ is the output, this means that before the interaction both $u$ and $v$ own data, and the algorithm orders $v$ to transmit its data to $u$. The algorithm is able to change the memory of the interacting nodes, for instance to store information that can be used in future interactions. In the sequel, $\DODA$ denotes the set of all DODA algorithms. And $\DODA^{\emptyset}$ denotes the set of DODA algorithms that only require oblivious nodes.

A DODA can require some knowledge to work. A knowledge is a function (or just an attribute) given to every node that gives some information about the future, the topology or anything else. By default, a node $u\in V$ has two information: its identifier $u.ID$ and a boolean $u.isSink$ that is true if $u$ is the sink, and false otherwise. A DODA algorithm may use additional functions associated with different knowledge. $\DODA(\mathfrak{i}_1, \mathfrak{i}_2, \ldots)$ denotes the set of DODA algorithms that use the functions $\mathfrak{i}_1, \mathfrak{i}_2, \ldots$. For instance, we define for a node $u\in V$ the function $u.meetTime$ that maps a time $t\in\N$ with the smallest time $t'>t$ such that $I_{t'} = \{u,s\}$ \ie, the time of the next interaction with the sink (for $u=s$, we define $s.meetTime$ as the identity, $t\mapsto t$). Then $\DODA(meetTime)$ refers to the set of DODA algorithms that use the information $meetTime$.

\subsection{Adversary Models}

In this paper we consider three models of adversaries:
\begin{itemize}
    \item The oblivious adversary. This adversary knows the algorithm's code, and must construct the sequence of interactions before the execution starts. 

    \item This adversary knows the algorithm's code and can use the past execution of the algorithm to construct the next interaction. However, it must make its own decision as it does not know in advance the decision of the algorithm. In the case of deterministic algorithms, this adversary is equivalent to the oblivious adversary.
    
    \item The randomized adversary. This adversary constructs the sequence of interactions by picking pairwise interactions uniformly at random.
\end{itemize}
Section \ref{sec:oblivious adversary} presents our results with the oblivious and the adaptive online adversary. The results with the randomized adversary are given in section \ref{sec:random adversary}.

\subsection{Definition of Cost}
To study and compare different DODA algorithms, we use a tool slightly different from the competitive analysis that is generally used to study online algorithms. The competitive ratio of an algorithm is the ratio between its performance and the optimal offline algorithm's performance. However, one can hardly define objectively the performance of an algorithm. For instance, if we just consider the number of interactions before termination, then an oblivious adversary can construct a sequence of interactions starting with the same interaction repeated an arbitrary number of time. In this case, even the optimal algorithm has infinite duration. Moreover, the adversary can choose the same interaction repeatedly after that the optimal offline algorithm terminates. This can prevent any non optimal algorithm from terminating and make it have an infinite competitive-ratio. 

To prevent this we define the cost of an algorithm. Our cost is a way to define the performance of an algorithm, depending on the performance of the optimal offline algorithm. We believe our definition of cost is well-suited for a lots of problems where the adversary has a strong power, especially in dynamic networks. One of its main advantage is that it is invariant by trivial transformation of the sequence of interactions, like inserting or deleting duplicate interactions.

For the sake of simplicity, a data aggregation schedule with minimum duration (performed by an offline optimal algorithm) is called a \emph{convergecast}.
Consider a sequence of interactions $I$. Let $opt(t)$ be the ending time of a convergecast on $I$, starting at time $t\in\N$. If the ending time is infinite (if the optimal offline algorithm does not terminate) we write $opt(t)=\infty$. Let $T:\N_{\ge 1} \mapsto \N\cup\{\infty\}$ be the function defined as follow:

\begin{align*}
    T(1) &= opt(0)\\
    \forall i\geq1\quad T(i+1) &= opt(T(i) + 1)\\ 
\end{align*}
$T(i)$ is the duration of $i$ successive convergecasts (two convergecasts are consecutive if the second one starts just after the first one completes).

Let $duration(A,I)$ be the termination time of algorithm $A$ executed on the sequence of interactions $I$.
Now, we define the cost $\cost{A}{I}$ of an algorithm $A$ on the sequence $I$, as the smallest integer $i$ such that $duration(A,I)\leq T(i)$:
\[
    \cost{A}{I} = \min \{i\;|\; duration(A,I)\leq T(i)\} 
\]
This means that $\cost{A}{I}$ is an upper bound on the number of successive convergecasts we can perform during the execution of $A$, on the sequence $I$.
It follows from the definition that an algorithm performs an optimal data aggregation if and only if $\cost{A}{I} = 1$. 

Also, if $duration(A,I)=\infty$, then it is possible that $\cost{A}{I}<\infty$. Indeed, if $i_{\max} = \min_i\{i\,|\,T(i) = \infty\}$ is well-defined, then $\cost{A}{I} = i_{\max}$, otherwise $\cost{A}{I}=\infty$.

%
%

\section{Oblivious and Online Adaptive Adversaries}\label{sec:oblivious adversary}
In this section we give several impossibility results when nodes have no knowledge, and then show several results depending on the amount of knowledge. We choose to limit our study to some specific knowledge, but one can be interested in studying the possible solutions for different kind of knowledge.

\subsection{Impossibility Results When Nodes Have no Knowledge}

\begin{theorem}
    For every algorithm $A\in \DODA$, there exists an adaptive online adversary generating a sequence of interactions $I$ such that $\cost{A}{I} = \infty$.
\end{theorem}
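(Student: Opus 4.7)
The plan is to construct an adaptive adversary sequence $I$ on which $duration(A,I)=\infty$ while each pair of consecutive interactions forms a complete offline convergecast, so that $T(i)\le 2i<\infty$ for every $i$ and therefore $\cost{A}{I}=\infty$ by the definition of cost. The intuition is that, with no knowledge, $A$ must commit to a transmission direction on the first interaction between two non-sink nodes without knowing which $s$-edge will appear next, and the adaptive adversary can always play the $s$-edge that sterilises this commitment while leaving offline a perfectly usable alternative.

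Fix two distinct non-sink nodes $a$ and $b$, which exist because $n\ge 3$. The adversary indefinitely iterates a two-step block $\{a,b\},\{z,s\}$ in which $z\in\{a,b\}$ is decided after seeing $A$'s move on the block's first interaction: $z$ is set to whichever of $a,b$ is empty in $A$'s state immediately after $\{a,b\}$, and to $a$ if none is empty. Hence the adversary plays $\{a,s\}$ whenever $A$ outputs $b$ or $\bot$ on $\{a,b\}$ and plays $\{b,s\}$ whenever $A$ outputs $a$. In each case the second interaction of the block either targets an empty non-sink (so nothing moves in $A$'s execution), or, in the $\bot$-subcase, only delivers $z$'s own data to $s$, leaving the other non-sink still holding its own untouched data. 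Since a node that has once transmitted is permanently unable to either transmit or receive, the stranded non-sink stays stranded through all future blocks, and therefore $duration(A,I)=\infty$.

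Offline, knowing the whole sequence, plays the opposite aggregation at each $\{a,b\}$: it sends the non-$z$ node's data into $z$ and then $z$'s data into $s$, completing a convergecast in exactly two interactions per block, so $T(i)\le 2i<\infty$ for every $i$. Combined with $duration(A,I)=\infty$ this yields $\cost{A}{I}=\infty$. The main obstacle is that $A$ may be randomised or history-dependent, but the adversary's rule for $z$ only refers to $A$'s realised move on the first step of the block, which is committed before the second step of the block must be announced; the argument therefore survives sample path by sample path regardless of $A$'s randomness or memory, and the invariant ``at least one non-sink node still holds undelivered data in $A$'s state'' is preserved from one block to the next.
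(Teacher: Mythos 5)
Your construction is correct and is essentially the paper's own argument: an adaptive adversary on the three nodes $a,b,s$ that alternates the edge $\{a,b\}$ with a sink-edge chosen, after seeing the algorithm's move, to land on a node that has already given up its data (or to strand the other node), so the algorithm never terminates while every pair of consecutive interactions still admits an offline convergecast, forcing $\cost{A}{I}=\infty$. The only difference is cosmetic — you organize the adversary into uniform two-interaction blocks with an adaptively chosen sink endpoint, where the paper spells out the same strategy as a small case analysis.
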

\begin{proof}
    Let $I$ the sequence of interactions between $3$ nodes $a$, $b$, and the sink $s$, defined as follows.
    $I_0 = \{a,b\}$. If $a$ transmits, then for every $i\in\N$, $I_{2i+1} = \{a,s\}$ and $I_{2i+2} = \{a,b\}$ so that $b$ will never be able to transmit.
    Symmetrically if $b$ transmits the same thing happens.
    If no node transmits, then $I_1 = \{b,s\}$. If $b$ transmits, then $I_{2i+2} = \{a,b\}$ and $I_{2i+3} = \{b,s\}$ so that $a$ will never be able to transmit.
    Otherwise $I_2 = \{a,b\}$ and continue as in the first time.
    $A$ never terminates, and a convergecast is always possible, so that $\cost{A}{I} = \infty$.
\end{proof}

In the case of deterministic algorithm, the previous theorem is true even with an oblivious adversary. However, for a randomized algorithm, the problem is more complex. The following theorem states that the impossibility results for oblivious randomized algorithm, leaving the case of general randomized algorithms against oblivious adversary as an open question.


\begin{theorem}
    For every randomized algorithm $A\in \DODA^\emptyset$, there exists an oblivious adversary generating a sequence of interactions $I$ such that $\cost{A}{I} = \infty$ with high probability\footnote{An event $A$ occurs with high probability, when $n$ tends to infinity, if $P(A) > 1-o\left(\dfrac{1}{\log(n)}\right)$}.
\end{theorem}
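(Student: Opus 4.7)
\bigskip
\noindent\textbf{Proof plan.} My strategy is to amplify the construction of the previous theorem by running many independent ``trap'' gadgets on disjoint node pairs. Since nodes are oblivious, at each interaction $\{u,v\}$ at time $t$ the decision of $A$ depends only on the tuple $(u.ID, v.ID, t, u.isSink, v.isSink)$ and the algorithm's internal random bits. The oblivious adversary knows the code of $A$, hence it knows the full distribution of decisions at every potential $(u,v,t)$ triple before committing to $I$.

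I would first build a \emph{gadget} on three nodes $a, b, s$ generalizing the pattern $\{a,b\}, \{a,s\}, \{b,s\}$ from the previous proof. The goal is a finite sequence of interactions restricted to $\{a,b,s\}$ such that, with some constant probability $\epsilon > 0$ independent of $A$, at the end of the gadget at least one of $a, b$ still owns data that the gadget's remaining interactions cannot deliver to $s$. The case analysis mimics the deterministic argument: if at $\{a,b\}$ the algorithm transmits with non-negligible probability, the adversary schedules subsequent interactions that exclude the receiver from reaching $s$; if at $\{a,b\}$ the algorithm is almost always silent, the adversary schedules consecutive $\{a,s\}$ and $\{b,s\}$ interactions at which silence is again likely with non-negligible probability, and iterates. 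Making the gadget length depend only on the numerical threshold separating ``negligible'' from ``non-negligible'' probability yields $\epsilon$ as an absolute constant.

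I would then partition the $n-1$ non-sink nodes into $k = \lfloor (n-1)/2 \rfloor$ disjoint pairs $(a_i, b_i)$, concatenate the gadgets $\sigma_1, \ldots, \sigma_k$ in disjoint time windows, and append a ``cleanup'' block $\rho$ scheduling exactly the interactions an offline optimum needs to perform a convergecast across the whole meta-block. Set $I$ to be the infinite repetition of $\sigma_1 \sigma_2 \cdots \sigma_k \rho$. Because the gadgets involve disjoint node pairs and disjoint time indices, a deferred-decision argument on $A$'s random tape gives independence of the failure events across gadgets, so the probability that all $k$ gadgets succeed is at most $(1-\epsilon)^k = e^{-\Omega(n)}$, which dominates the threshold $o(1/\log n)$. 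With the complementary probability at least one gadget strands data on a non-sink node; by construction $\rho$ never re-aggregates stranded data, hence $duration(A,I) = \infty$; yet $\rho$ still provides enough interactions for the offline algorithm to perform consecutive convergecasts, hence $T(i) < \infty$ for all $i$, so $\cost{A}{I} = \infty$.

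The main obstacle, and where I expect most of the technical work, is twofold. First, I must establish the per-gadget constant $\epsilon$ uniformly over \emph{all} randomized oblivious strategies, in particular for algorithms that share random bits across interactions; this likely requires a coupling or re-randomization step to reduce to the fresh-randomness case where the marginal transmission probabilities really govern the gadget's outcome. Second, I must design the cleanup block $\rho$ to simultaneously allow offline convergecast and to block online recovery of data already stranded in a failed gadget --- two requirements that pull in opposite directions and that must be reconciled by distinguishing which pairs $(a_i, b_i)$ the offline algorithm uses to forward data in $\rho$ from those the online algorithm has already spoiled inside $\sigma_i$.
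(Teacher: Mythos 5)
Your plan has a genuine gap, and it sits exactly where you flag your second ``obstacle'': the cleanup block $\rho$ cannot be designed to do what you need, and without it the amplification collapses. For $\cost{A}{I}=\infty$ you need the stranded node to stay stranded through the \emph{entire infinite suffix} of $I$, which includes every future repetition of $\rho$ and of the gadgets. But $\rho$ must contain, for each pair $(a_i,b_i)$, a route to $s$ (otherwise some $T(j)=\infty$ and the cost can be finite even though $A$ never terminates), and which of $a_i$ or $b_i$ ends up stranded is decided by $A$'s coin flips at run time, whereas the oblivious adversary must commit to $\rho$ in advance. Whatever escape routes $\rho$ offers, there is a constant-probability outcome of the gadget in which the stranded node is precisely one that can use them, and a route that works once works again in every repetition of the block. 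So your product bound $(1-\epsilon)^k$ is computed for the wrong event: it concerns stranding \emph{at the end of each gadget's window}, not stranding \emph{forever}, and the latter is what the cost function requires. (Your first obstacle --- extracting a per-gadget constant $\epsilon$ from a committed, non-adaptive schedule --- is also nontrivial and is essentially resolved in the paper by the same device described below.)

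The paper closes this hole with an idea your sketch is missing. Rather than many local gadgets, it plays a single global prefix of round-robin interactions $\{u_i,s\}$ and exploits the obliviousness of $A$ to compute, offline, the probabilities $p_i$ that each node withholds its data; the first length $l_0$ at which $\prod_i p_i$ drops below $1/n$ guarantees simultaneously that (i) with high probability some node has already transmitted (a ``dead'' node exists), and (ii) a specific, offline-identifiable node $u_d$ still holds its data with probability $n^{-1/(n-2)}=1-O(1/\sqrt{n})$. The infinite suffix is then one fixed ring in which the \emph{only} path from $u_d$ to $s$ passes through every other node, so any single dead node blocks $u_d$ forever --- the adversary does not need to know \emph{which} node died --- while the offline convergecast along the same ring remains finite, so $T(i)<\infty$ for all $i$. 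If you want to rescue your gadget decomposition you would have to build this ``any dead node blocks the one surviving node'' property into $\rho$, at which point you have essentially reconstructed the paper's single-path argument rather than a genuinely independent one.
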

\begin{proof}
    Let $V=\{s, u_0, \ldots, u_{n-2}\}$. In the sequel, indexes are modulo $n-1$ \ie,  $\forall i,j\geq0$, $u_i=u_j$ with $i\equiv j \mod n-1$. Let $I^\infty$ defined by, for all $i\in\N$, $I^\infty_i = \{u_i, s\}$.
    Let $I^l$ be the finite sequence, prefix of length $l>0$ of $I^\infty$.
    For every $l>0$, the adversary can compute the probability $P_l$ that no node transmits its data when executing $A$ on $I^l$. $(P_l)_{l>0}$ is a non-increasing sequence, it converges to a limit $\mathcal{P}\geq 0$.
    For a given $l$, if $P_l\geq 1/n$, there is at least two nodes whose probability not to transmit when executing $A$ on $I^l$ is at least $n^{-\frac{1}{n-2}} = 1 - O\left(\frac{1}{\sqrt{n}}\right)$.
To prove this, we can see the probability $P_l$ as the product of $n-1$ probabilities $p_0$,  $p_1$, $\ldots$, $p_{n-2}$ where $p_i$ is the probability that node $u_i$ does not transmit during $I^l$. Those events are independent since the algorithm is oblivious. Let $p_{d}\geq p_{d'}$ be the two greatest probabilities in $\{p_i\}_{0\leq i\leq n-2}$, we have:
    \[\left(\prod_{i=0}^{n-2} p_i \geq \frac{1}{n}\right)
                     \Rightarrow \left(\sum_{i=0}^{n-2} \log(p_i) \geq \log\left(\frac{1}{n}\right)\right)
        \Rightarrow  \left((n-2)\log(p_{d'}) \geq\log\left(\frac{1}{n}\right)\right)
        \Rightarrow \left(p_{d'} \geq n^{-\frac{1}{n-2}}\right)
\]    
    
    This implies that, if $\mathcal{P}\geq 1/n$, then $A$ does not terminate on the sequence $I^{\infty}$ with high probability.
    
    Otherwise, let $l_0$ be the smallest index such that $P_{l_0} < 1/n$. So that with high probability, at least one node transmits when executing $A$ on $I^{l_0}$. 
    Also, $P_{l_0 - 1}\geq 1/n$ so that the previous argument implies that there is at least two nodes $u_d$ and $u_{d'}$ whose probability to still have a data (after executing $A$ on $I^{l_0 - 1}$) is at least $n^{-\frac{1}{n-2}}$. If $l_0 = 0$ we can choose $\{u_d, u_{d'}\} = \{u_1, u_2\}$. We have $u_d\neq u_{l_0}$ or $u_{d'}\neq u_{l_0}$. Without loss of generality, we can suppose $u_d\neq u_{l_0}$, so that the probability that $u_d$ transmits is the same in $I^{l_0 - 1}$ and in $I^{l_0}$.
    
    Now, $u_d$ is a node whose probability not to transmit when executing $A$ on $I^{l_0}$ is at least $n^{-\frac{1}{n-2}} = 1 - O\left(\frac{1}{\sqrt{n}}\right)$. Let $I'$ be the sequence of interactions defined as follow:
    \[
        \forall i\in [0,n-2]\setminus\{d-1\},\; I'_i = \{u_{i}, u_{i+1}\},\;
        I'_{d-1} = \{u_{d-1}, s\}
    \]
     
    $I'$ is constructed such that $u_d$ (the node that has data with high probability) must send its data along a path that contains all the other nodes in order to reach the sink. But this path contains a node that does not have a data.

    Let $I$ be the sequence of interaction starting with $I^{l_0}$ and followed by $I'$ infinitely often.
    We have shown that with high probability, after $l_0$ interactions, at least one node transmits its data and the node $u_d$ still has a data. The node that does not have data prevents the data owned by $u_d$ from reaching $s$. So that $A$ does not terminate, and since a convergecast is always possible, then $\cost{A}{I} = \infty$.
\end{proof}

\subsection{When Nodes Know The Underlying Graph}

Let $\bar{G}$ be the underlying graph \ie, $\bar{G}=(V,E)$ with $E=\left\{(u,v)\,|\,\exists t \in \N, \, I_t = \{u,v\}\right\}$. The following results assume that the underlying graph is given initially to every node.
\begin{theorem}\label{thm: impossibility results with knowledge of the underlying graph agains online adversary}
    If $n\geq 4$, then, for every algorithm $A\in \DODA(\bar{G})$, there exists an online adaptive adversary generating a sequence of interactions $I$ such that $\cost{A}{I} = \infty$.
\end{theorem}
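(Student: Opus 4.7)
The plan is to generalize the adaptive-adversary construction of Theorem~1 by adding a fourth node whose role is to make the underlying graph $\bar{G}$ rich enough to support an offline convergecast from every time, without undoing the $3$-node trap. The key observation is that knowing $\bar{G}$ tells the algorithm only \emph{which} edges will eventually appear, not \emph{when}; the online adaptive adversary still controls the timing, and the proof exploits exactly this timing gap (which, as noted in the model section, against a deterministic $A$ collapses to an oblivious-like adversary that can precompute the whole sequence).

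Concretely, I would pick four nodes $\{s,a,b,c\}$ (attaching extra leaves to $s$ when $n>4$) and choose $\bar{G}$ to be a small graph such as the $4$-cycle $s\text{--}a\text{--}b\text{--}c\text{--}s$, or the triangle on $\{s,a,b\}$ together with the pendant edge $\{s,c\}$. The adversary plays $\{a,b\}$ first and branches on $A$'s deterministic response: if $A$ transmits in either direction at $\{a,b\}$, the adversary follows a variant of the $3$-node trap from Theorem~1 around the now-empty node, using the edges incident to that empty node and the extra edges involving $c$ so that the aggregated data can never reach $s$; if $A$ remains silent at $\{a,b\}$, the adversary plays another $\bar{G}$-edge and iterates the case analysis, reducing eventually either to a transmission branch or to the degenerate branch in which $A$ never transmits and therefore trivially fails to terminate. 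In the transmission branch the extra node $c$ is crucial: by absorbing the $\bar{G}$-mandated interactions incident to $s$ (e.g.\ playing $\{s,c\}$ whenever the adversary would otherwise have to play an edge that could free the trapped node), the adversary keeps the trap intact while respecting the committed $\bar{G}$.

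The main obstacle is meeting the two requirements for $\cost{A}{I} = \infty$ simultaneously: (i) $A$ does not terminate on $I$, and (ii) $T(i) < \infty$ for every $i$. Requirement (i) is handled by the trap branch, which genuinely strands some node's data under $A$'s own actions even though $A$ knows $\bar{G}$, precisely because $A$ is forced to commit at $\{a,b\}$ before observing when the edges $\{s,a\}$, $\{b,s\}$ (or their analogues in the chosen $\bar{G}$) will be played. Requirement (ii) is handled by interleaving, between trap blocks, a short cyclic sub-sequence of interactions such as $\{a,b\},\{b,c\},\{s,c\},\{s,a\}$ that admits a valid convergecast from any fresh start (for instance $b\to c$, $c\to s$, $a\to s$), so that the offline optimum can always complete another aggregation regardless of the algorithm's state. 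Merging the two pieces into a single interaction sequence $I$ whose edge set is exactly $\bar{G}$ is the most delicate step; it works because the trap interactions use only edges of $\bar{G}$, so a convergecast-enabling block can be appended at any time without breaking $\bar{G}$-consistency, and combining an infinite stream of such blocks with the trap branch yields an $I$ on which $A$ never terminates while $T(i)<\infty$ for all $i$, giving $\cost{A}{I}=\infty$.
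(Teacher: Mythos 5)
Your high-level plan coincides with the paper's: take $V=\{s,a,b,c\}$ with $\bar{G}$ a $4$-cycle, force $A$ to commit at an interaction between two non-sink nodes, and punish whichever choice it makes while keeping offline convergecasts available so that $T(i)<\infty$ for all $i$. But there is a genuine gap in exactly the step you flag as ``the most delicate,'' and as written that step fails. Your convergecast-enabling block $\{a,b\},\{b,c\},\{s,c\},\{s,a\}$ contains \emph{every} edge of the cycle, in an order that lets any single node's data reach $s$. So if, say, $a$ transmitted to $b$ and $b$ is the trapped node, then inside one interleaved block the algorithm plays $b\to c$ at $\{b,c\}$ and $c\to s$ at $\{s,c\}$ and terminates; the symmetric case $b\to a$ escapes via $\{s,a\}$. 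Interleaving such a block therefore destroys requirement (i) for every choice of trap. The worry you actually address ($\bar{G}$-consistency of appended blocks) is not the obstacle; the obstacle is that the block which guarantees $T(i)<\infty$ must \emph{omit}, forever after the trap is sprung, the one sink-incident edge the stuck data would need, while still admitting a fresh convergecast along the Hamiltonian path the other way round. That omission is legitimate only because the omitted edge was already played once before the trap, so $\bar{G}$ is unchanged. This is precisely the paper's construction: it opens with $\{u_1,s\},\{u_3,s\},\{u_2,u_1\},\{u_2,u_3\}$ so that all four edges of $\bar{G}$ appear immediately, and if $u_2$ transmits to $u_1$ it thereafter repeats only $\{u_1,u_2\},\{u_2,u_3\},\{u_3,s\}$: the trapped $u_1$ only ever meets the already-transmitted $u_2$, yet a fresh convergecast $u_1\to u_2\to u_3\to s$ completes every three interactions.

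A second, related gap is that you never actually specify the trap in the transmission branch: you defer to ``a variant of the $3$-node trap from Theorem~1,'' but that trap lives on a triangle, whereas in the $4$-cycle the three relevant nodes (the receiver of the first transmission, its remaining neighbour, and $s$) induce only a path, so Theorem~1's alternation does not transfer; the needed argument is the withheld-edge schedule above. Your alternative $\bar{G}$ (triangle on $\{s,a,b\}$ plus pendant $\{s,c\}$) would let you reuse Theorem~1 more directly, but then the mandatory recurrences of $\{s,c\}$ needed for $T(i)<\infty$ must be shown not to interfere with the triangle trap, which you do not do. In short, the skeleton matches the paper, but the trap-preserving repeated schedule --- the actual content of the proof --- is missing, and the schedule you do propose breaks the trap.
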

\begin{proof}
    $V = \{s, u_1, u_2, u_3\}$. We create a sequence of interactions with the underlying graph $\bar{G}=\left(V,\left\{(s, u_1), (u_1, u_2),(u_2, u_3), (u_3, s)\right\}\right)$. We start with the following interactions: 
    \begin{equation}\label{eq:interaction to fail DODA with underlying graph}
    \left(\{u_1, s\},\{u_3, s\},\{u_2, u_1\},\{u_2, u_3\}\right).
    \end{equation}    
    If $u_2$ transmits to $u_1$ in $I_2$, then we repeat infinitely often the three following interactions:
    \[\left(\{u_1, u_2\},\{u_2, u_3\},\{u_3, s\}, ...\right).\]
    Else, if $u_2$ transmits to $u_3$ in $I_3$, then we repeat infinitely often the three following interactions:
    \[\left(\{u_3, u_2\},\{u_2, u_1\},\{u_1, s\}, ...\right).\]
    Otherwise, we repeat the four interactions (\ref{eq:interaction to fail DODA with underlying graph}), and apply the previous reasoning. Then, $A$ never terminates, and a convergecast is always possible, so that $\cost{A}{I} = \infty$.
\end{proof}

\begin{theorem}
    If the interactions occurring at least once, occur infinity often, then there exists $A \in \DODAOblivious(\bar{G})$ such that $\cost{A}{I} < \infty$ for every sequence of interactions $I$. However, $\cost{A}{I}$ is unbounded.
\end{theorem}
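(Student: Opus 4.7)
The plan is to design $A$ so that it exploits the fact that every node knows $\bar{G}$ by fixing, once and for all, a canonical rooted spanning tree $T$ of $\bar{G}$ with root $s$ (for instance the lexicographically smallest BFS tree). For every $u$, write $p(u)$ for its parent in $T$ and $T_u$ for the subtree of $T$ rooted at $u$. I augment the data manipulated by $A$ with a counter $k_u \in \{1, \ldots, n\}$ recording how many original data items have been merged into the current data of $u$; $k_u$ is initialized to $1$ and updated to $k_u + k_v$ at every incoming transmission from $v$. This counter is part of the data (it travels with it), has size $O(\log n)$, and the augmented aggregation (payload aggregation together with counter sum) still satisfies the constant-size constraint; an oblivious node may inspect its own data at each interaction without this counting as persistent memory. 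The decision rule of $A$ at $I_t=\{u,v\}$ is then: if $\{u,v\}$ is an edge of $T$ with $v$ the child of $u$, if both $u$ and $v$ still own data, and if $k_v = |T_v|$, then output $u$ (so $v$ transmits to $u$); otherwise, output $\bot$.

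To prove $\cost{A}{I} < \infty$, I proceed by a bottom-up induction on the depth of $T$ that jointly establishes two statements: every non-sink node eventually transmits, and at the moment of that transmission its parent still holds data. The second is the key invariant: $p(v)$ can transmit only after $k_{p(v)} = |T_{p(v)}|$, which in particular requires $v$'s data to already have been aggregated upward, contradicting the hypothesis that $v$ is about to transmit. For a leaf $\ell$ the condition $k_\ell = 1 = |T_\ell|$ is satisfied from the start, and by the infinite-frequency hypothesis the edge $\{\ell, p(\ell)\}$ occurs in finite time, at which the invariant guarantees $p(\ell)$ still has data, so $\ell$ transmits. The inductive step for an internal node $v$ uses the hypothesis that each child transmits in finite time, and hence strictly increases $k_v$; once $k_v = |T_v|$ holds, the next occurrence of $\{v, p(v)\}$ triggers $v$'s transmission against a parent that still owns data. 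Applying this up to the root shows that $\mathrm{duration}(A,I) < \infty$. Since the infinite-frequency hypothesis also forces $T(i) < \infty$ for every $i \ge 1$ (offline one can schedule arbitrarily many consecutive convergecasts), this gives $\cost{A}{I} < \infty$.

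For the unboundedness part, the plan is to build, for every integer $k$, an interaction sequence $I_k$ (say on a cyclic $\bar{G}$) with $\cost{A}{I_k} \ge k$. The idea is to place shortcut non-tree edges densely at the beginning of $I_k$ so that the offline optimum uses them to perform $k$ very fast consecutive convergecasts, while $A$, locked onto $T$, must wait for tree-edge occurrences scheduled arbitrarily late (yet still infinitely often). The main obstacle in the finiteness proof is the joint induction that maintains the ``parent still owns data'' invariant: the condition $k_v = |T_v|$ is precisely what an oblivious rule needs in order to detect, at each interaction, that $v$'s entire subtree has been aggregated, and without this augmentation one cannot in general prevent a premature ancestor transmission from leaving descendants stuck forever on a tree-shaped $\bar{G}$. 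A non-trivial portion of the proof therefore consists in justifying that the counter is legitimately part of the data and hence compatible with the oblivious regime.
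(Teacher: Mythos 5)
Your proposal is correct and follows essentially the same route as the paper: fix a canonical spanning tree of $\bar{G}$ rooted at $s$, have each node transmit to its parent only once its entire subtree has been aggregated (finiteness following from every tree edge recurring infinitely often, and the parent necessarily still owning data at that moment), and obtain unboundedness by letting the offline optimum run arbitrarily many fast convergecasts over non-tree edges while $A$ waits on a tree edge scheduled arbitrarily late. Your subtree-size counter carried inside the data is simply a more careful, explicitly oblivious implementation of the paper's informal ``wait to receive from all children'' rule and does not change the argument.
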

\begin{proof}
    Nodes can compute a spanning tree $T$ rooted at $s$ (they compute the same tree, using nodes identifiers). Then, each node waits to receive the data from its children and then transmits to its parent as soon as possible. All transmissions are done in finite time because each edge of the spanning tree appears infinitely often.
    However, when $\bar{G}$ is not a tree, there exists another spanning tree $T'$. Let $e$ be an edge of $T$ that is not in $T'$. By repeated interactions of edges of $T'$, an arbitrary amount of convergecasts can be performed while a node is waiting for sending data to its parent through $e$ in execution of $A$.
\end{proof}
\begin{theorem}
    If $\bar{G}$ is a tree, there exists $A\in \DODAOblivious(\bar{G})$ that is optimal.
\end{theorem}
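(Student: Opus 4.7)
The algorithm I would propose is precisely the one used in the previous theorem, specialised to the case where $\bar{G}$ is itself a tree: each node computes $\bar{G}$ rooted at $s$, identifies its parent $p(u)$ and its set of children $C(u)$, and when an interaction $\{u,v\}$ with $v=p(u)$ occurs it transmits to $v$ if and only if it currently holds the aggregated data of its whole subtree. As in the previous theorem, a node recognises this situation from the data it holds (no additional persistent memory is needed beyond what is already attached to the data), so $A\in \DODAOblivious(\bar{G})$.

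To establish optimality, my plan is to show that $A$ produces \emph{exactly} the same transmission schedule as the offline optimum on any sequence $I$ whose underlying graph is $\bar{G}$. Because $\bar{G}$ is a tree, every datum must eventually reach $s$ along the unique $u$-to-$s$ path in $\bar{G}$, so both the offline optimum and $A$ are forced to transmit along the same edges; the only freedom is \emph{when} each edge is used. I would then prove by induction on the depth of $u$ in the tree rooted at $s$ that the transmission time $t^A_u$ of $u$ under $A$ coincides with the transmission time $t^{\mathsf{opt}}_u$ in the offline optimum. For leaves, both algorithms transmit at the very first interaction $\{u,p(u)\}$, so equality is immediate. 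For an internal node $u$, the induction hypothesis gives that every child $c\in C(u)$ transmits to $u$ at the same time in both schedules, so $u$ becomes ready at exactly the same instant in both; the next occurrence of $\{u,p(u)\}$ — which is what both algorithms use — then yields $t^A_u = t^{\mathsf{opt}}_u$.

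Concluding is immediate from the definition of cost. If $opt(0)<\infty$, then $duration(A,I) = opt(0) = T(1)$, hence $\cost{A}{I}=1$; and if $opt(0)=\infty$ then $T(1)=\infty$, which also forces $\cost{A}{I}=1$. The only genuinely delicate step is the obliviousness claim: \textit{stricto sensu}, to decide readiness $u$ must be able to tell whether each of its children has already contributed to its current datum, and this requires a small amount of bookkeeping inside the datum itself. I would handle this exactly as the previous theorem implicitly does, by augmenting the transmitted datum with the set of descendants that it aggregates; this information lives in the datum (and therefore travels with each transmission) rather than as persistent node memory, so the algorithm remains in $\DODAOblivious(\bar{G})$, and the induction above goes through unchanged.
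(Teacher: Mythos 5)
Your algorithm is exactly the paper's (the paper's entire proof is the one-sentence description of the greedy rule ``wait for all children, then transmit to the parent at the first opportunity''), and your overall plan is the right one. One step, however, is stated too strongly and would fail as written: it is not true that $t^A_u = t^{\mathsf{opt}}_u$ for every $u$. The offline optimum only minimizes the overall completion time, so it is free to delay individual transmissions; already in your base case, an optimal schedule need not use the \emph{first} occurrence of $\{u,p(u)\}$ for a leaf $u$. What your induction actually establishes, and all that is needed, is the inequality $t^A_u \le t^{S}_u$ for \emph{every} valid schedule $S$: in any schedule each child of $u$ must transmit to $u$ strictly before $u$ transmits to $p(u)$ (once $u$ has transmitted it can no longer receive), so by induction on the \emph{height} of the subtree of $u$ (not its depth --- the readiness of $u$ is determined by its children, and leaves need not all lie at the same depth) $u$ becomes ready under $A$ no later than under $S$, and $A$ then uses the first subsequent occurrence of $\{u,p(u)\}$. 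Taking the maximum over the children of $s$ gives $duration(A,I)\le opt(0)=T(1)$, hence $\cost{A}{I}=1$; the case $opt(0)=\infty$ is handled as you say. Your obliviousness bookkeeping (tagging the datum with the set of descendants it aggregates, so that readiness is read off the datum rather than from persistent node memory) is a reasonable reading of the model and is no less rigorous than the paper, which silently assumes the same device in the preceding theorem.
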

\begin{proof}
    Each node waits to receive the data from its children, then transmits to its parent as soon as possible.
\end{proof}

\subsection{If Nodes Know Their Own Future}
For a node $u\in V$, $u.future$ denotes the future of $u$ \ie, the sequence of interactions involving $u$, with their times of occurrences. In this case, according to the model, two interacting nodes exchange their future and non-oblivious nodes can store it. This may seem in contradiction with the motivation of the problem that aims to reduce the number of transmissions. However, it is possible that the data must be sent only once for reasons not related to energy (such as data that cannot be duplicated, tokens, etc.). That is why we consider this case, for the sake of completeness, even if oblivious algorithms should be favored.
\begin{theorem}
    There exists $A\in \DODA(future)$ such that $\cost{A}{I} \leq n$ for every sequence of interactions $I$.
\end{theorem}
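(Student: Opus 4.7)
The plan is to design a greedy algorithm $A$ that uses the $meetTime$ function (which each node computes from its own $future$) to funnel data toward the holder whose next meeting with $s$ is earliest, and to argue that each optimal convergecast window absorbs at least one data-holder into the sink.

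\textbf{Algorithm.} When two data-holders $u$ and $v$ interact at time $t$, each computes its own $meetTime(t)$, and the node with the larger value transmits to the node with the smaller value (if only one of them still owns data, no transmission occurs). Since $s.meetTime(t) = t$ is the smallest possible value, any interaction of $s$ with another data-holder automatically triggers a transmission to $s$.

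\textbf{Key invariant.} I would prove by induction on $i\geq 1$ that at time $T(i)$ at most $n - i$ data-holders remain; iterating the invariant up to $i = n-1$ then yields $duration(A, I) \leq T(n-1) \leq T(n)$, hence $\cost{A}{I} \leq n$. For the inductive step (the base case $i=1$ is identical starting from time $0$ with $n$ data-holders), assume at least two data-holders at time $T(i)$ and let $u^*$ be one minimizing $m^* := u^*.meetTime(T(i))$. Because the optimal convergecast on $[T(i), T(i+1)]$ must contain at least one interaction $\{w, s\}$ with a data-holder $w$, we have $m^* \leq w.meetTime(T(i)) \leq T(i+1)$. I would then show that $u^*$ retains a datum throughout $[T(i), m^*]$: for any interaction of $u^*$ with another data-holder $v$ at a time $t \in (T(i), m^*)$, one has $u^*.meetTime(t) = m^*$ (since $u^*$ does not meet $s$ strictly between $T(i)$ and $m^*$) while $v.meetTime(t) \geq v.meetTime(T(i)) > m^*$, the last inequality being strict because $I_{m^*} = \{u^*, s\} \neq \{v, s\}$ for $v \neq u^*$. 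The algorithm therefore orders $v$ to transmit to $u^*$. At time $m^*$, the interaction $\{u^*, s\}$ triggers $u^*$'s transmission to $s$, so the number of data-holders strictly decreases.

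The main obstacle is this $meetTime$ bookkeeping inside a window: one must verify that $u^*$'s meet time remains exactly $m^*$ throughout the window and strictly beats every competing data-holder, so that $A$ never orders $u^*$ to transmit to anyone other than the sink. Once this is established, the rest is a clean induction over the $n-1$ convergecast windows.
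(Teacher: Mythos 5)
Your approach (a purely local $meetTime$-greedy rule analyzed window by window) is genuinely different from the paper's, but it contains a gap that is fatal both to the argument and to the algorithm itself. The unjustified step is the claim that the convergecast scheduled on $(T(i), T(i+1)]$ ``must contain at least one interaction $\{w,s\}$ with a data-holder $w$.'' A convergecast is defined on the interaction sequence alone, assuming every node still owns data; it is free to route data to $s$ through a node that, in the execution of your algorithm $A$, has already spent its single transmission. Such a node can neither receive nor forward anything in $A$'s execution, so no current data-holder need ever meet $s$ inside the window, and the inequality $m^* \leq T(i+1)$ fails. Concretely, take $V=\{s,a,b,c\}$ with $I_0=\{a,c\}$, $I_1=\{c,s\}$, followed by the block $\left(\{a,c\},\{b,c\},\{c,s\}\right)$ repeated forever, so that $a$ and $b$ never meet $s$ and never meet each other. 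At $t=0$ your rule makes $a$ (meet time $\infty$) transmit to $c$ (meet time $1$); at $t=1$, $c$ transmits to $s$. From then on $b$ still owns data but only ever meets the spent node $c$, so $A$ never terminates, while a convergecast ($a\to c$, then $b\to c$, then $c\to s$) completes in every block; hence $T(i)<\infty$ for all $i$ and $\cost{A}{I}=\infty$. The difficulty is intrinsic to meet-time-only decisions: $c$ ought to wait for both $a$ and $b$ before visiting $s$, which it cannot infer from meet times alone.

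The paper's proof avoids this entirely. It exploits the fact that control information is exchanged for free at every interaction: within the duration of $n-1$ successive convergecasts, every node's $future$ can be broadcast to all nodes; afterwards every node knows the entire remaining sequence, so all nodes compute the same optimal offline schedule and execute it within one further convergecast, giving $\cost{A}{I}\le n$. If you wish to salvage a greedy, window-by-window argument, you must guarantee progress using only nodes that have not yet transmitted, and that appears to require coordination based on more than each node's own next meeting time with the sink --- essentially the global knowledge that the paper's dissemination phase provides.
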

\begin{proof}
One can show that the duration of $n-1$ successive convergecasts is sufficient to perform a broadcast from any source. So every node broadcasts its future to the other nodes. After that, all the nodes are aware of the future of every node and can compute the optimal data aggregation schedule. So that it takes only one convergecast to aggregate the data of the whole network.  In total, $n$ successive convergecasts are sufficient.
\end{proof}

%
%

\section{Randomized Adversary}\label{sec:random adversary}

The randomized adversary constructs the sequence of interactions by picking a couple of nodes among all possible couples, uniformly at random. Thus, the underlying graph is a complete graph
of $n$ nodes (including the sink) and every interaction occurs with 
the same probability $p = \frac{2}{n(n-1)}$.

In this section, the complexity is computed on average (because the adversary is randomized) and no more ``in the worst case'' as previously. In this case, considering the number of interactions is sufficient to represent the complexity of an algorithm. We see in Theorem~\ref{thm:performance of offline algorithm agains randomized adversary} that an offline algorithm terminates in $\Theta(n\log(n))$ interactions w.h.p. This bound gives a way to convert the complexity in terms of number of interactions to a cost. Indeed, if an algorithm $\A$ terminates in $O(n^2)$ interactions, then its performance is $O(n/\log(n))$ times worse than the offline algorithm and $\cost{A}{I} = O(n/\log(n))$ for a randomly generated sequence of interactions $I$. For the sake of simplicity, in the remaining of the section, we give the complexity in terms of number of interactions.

Since an interaction does not depend on previous interactions, the algorithms we propose here are oblivious i.e., they do not modify the memory of the nodes. In more details, the output of our algorithms depends only on the current interaction and on the information available in the node.

First, we introduce three oblivious DODA algorithms. For the sake of simplicity, we assume that the output is ignored if the interacting nodes do not both have data. Also, to break symmetry, we suppose the nodes that interact are given as input ordered by their identifiers.

\begin{itemize}    
\item Waiting ($\Waiting\in \DODAOblivious$): A node transmits only when it is connected to the sink $s$: 
    \[
        \Waiting: (u_1, u_2, t)=\left\{
        \begin{array}{ll}
    u_i\,&\text{if $u_i.isSink$}\\
    \bot\,&\text{otherwise}
\end{array}        
        \right.
    \]
\item
Gathering (${\cal GA} \in \DODAOblivious$): A node transmits its data
when it is connected to the sink $s$ or to a node having data.:
\[
  {\cal GA}: (u_1, u_2, t) = \left\{ \begin{array}{ll}
  u_i & {\rm if}\ u_i.is.Sink \\
  u_1 & {\rm otherwise} \\
 \end{array} \right.
\]
    \item Waiting Greedy with parameter $\tau\in\N$ ($\WaitingGreedy{\tau}\in \DODAOblivious(meetTime)$): The node with the greatest meet time transmits, if its meet time is greater than $\tau$:

    \begin{align*}
        m_1 = u_1.meetTime(t) \\
        m_2 = u_2.meetTime(t)
    \end{align*}
        \[\arraycolsep=2.5pt\def\arraystretch{1}
                \WaitingGreedy{\tau}: (u_1, u_2, t){=}\left\{
                \begin{array}{ll}
            {u_1}&\text{if $m_1 \leq m_2\wedge \tau < m_2$}\\
            {u_2}&\text{if $m_1 > m_2\wedge \tau < m_1$}\\
            {\bot}&\text{otherwise}
        \end{array}        
                \right.
        \]
One can observe that after time $\tau$, the algorithm acts as the Gathering algorithm.
\end{itemize}
\subsection{Lower Bounds}

We show a lower bound $\Omega (n^2)$ on the number of interactions 
required for DODA against the randomized adversary. The lower bound holds for all algorithms (including randomized
ones) that do not have knowledge about future of the evolving network.
The lower bound matches the upper bound of the \emph{Gathering} algorithm given in the next subsection. This implies that this bound is tight.

\begin{theorem}
The expected number of interactions required for DODA is $\Omega(n^2)$.
\end{theorem}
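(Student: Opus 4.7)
The plan is to argue via a simple potential function. Let $X_t$ denote the number of nodes that still hold data at time $t$, so $X_0 = n$. I will first note that any successful run of a DODA must end with $X_T = 1$: only the sink can retain data, and since a node that has transmitted can never again receive, a successful algorithm never has the sink transmit. Moreover, $X_t$ is non-increasing and decreases by exactly one at every transmission, so there are exactly $n-1$ transmissions in total, each occurring during a \emph{useful} interaction whose two endpoints both currently hold data.

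Next, I will exploit the randomized adversary: each interaction is drawn i.i.d.\ uniformly over the $\binom{n}{2}$ pairs, independently of the past. Consequently, when $X_t = k$ the probability that the next interaction is useful depends only on $k$ and equals
\[
p_k \;=\; \frac{\binom{k}{2}}{\binom{n}{2}} \;=\; \frac{k(k-1)}{n(n-1)},
\]
independently of which specific $k$ nodes hold data and of any (possibly randomized) internal decisions of the algorithm. Let $\tau_k$ be the first time $X_t=k$, so $\tau_n=0$. Between $\tau_k$ and $\tau_{k-1}$ the data-holding set is frozen, so the waiting time until the next useful interaction is geometric with parameter $p_k$ and mean $1/p_k$; and an algorithm that chooses to skip a useful interaction merely restarts this geometric clock, which only increases the expected waiting time. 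Hence $E[\tau_{k-1}-\tau_k] \geq 1/p_k$. Telescoping,
\[
E[T] \;\geq\; \sum_{k=2}^{n} \frac{1}{p_k} \;=\; n(n-1) \sum_{k=2}^{n} \frac{1}{k(k-1)} \;=\; n(n-1)\!\left(1-\frac{1}{n}\right) \;=\; (n-1)^2 \;=\; \Omega(n^2).
\]

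The hardest step to make airtight is the bound $E[\tau_{k-1}-\tau_k]\geq 1/p_k$ for a \emph{randomized} algorithm that may adaptively skip useful interactions based on its (and the adversary's) history. The key observation to leverage is that under the uniform i.i.d.\ randomized adversary, conditioning on the past execution — including which useful interactions were skipped — does not change the future distribution of interactions; so from any time at which $X_t=k$ the waiting time to the next useful interaction remains geometric with parameter $p_k$. Once this memoryless property is stated cleanly, the telescoping estimate above finishes the proof.
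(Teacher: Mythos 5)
Your proof is correct. It is a slightly more global version of the paper's argument: the paper isolates only the \emph{last} transmission, observing that when a single non-sink node $v$ still holds data, the only useful interaction is $\{v,s\}$, which occurs with probability $\frac{2}{n(n-1)}$, so that final phase alone already costs $\frac{n(n-1)}{2}=\Omega(n^2)$ expected interactions. That is exactly the $k=2$ term of your telescoping sum $\sum_{k=2}^{n}1/p_k$, which dominates the sum anyway (your total $(n-1)^2$ is only about twice the paper's bound). What your version buys is a cleaner treatment of the points the paper leaves implicit: that every successful run has exactly $n-1$ transmissions ending at $X_T=1$, that the usefulness probability $p_k$ depends only on the count $k$ and not on the identity of the data holders, and that an adaptive or randomized algorithm skipping useful interactions can only lengthen each phase thanks to the memorylessness of the i.i.d.\ adversary. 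Both arguments are sound and yield the same $\Omega(n^2)$ bound.
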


\begin{proof}
We show that the last data transmission requires $\Omega(n^2)$ 
interactions in expectation.

We consider any (randomized) algorithm {\cal A} and its execution
for DODA.  Before the last transmission (from some node, say $v$, 
to the sink $s$), only $v$ has data except for $s$.

The probability that $v$ and $s$ interacts in the next interaction is 
$\frac{2}{n(n-1)}$.  
Thus, the expected number $EI$ of interactions required for $v$ to transmit to $s$ is:
\[ EI = \frac{n(n-1)}{2}  \]
So that the whole aggregation requires at least $EI=\Omega(n^2)$.
\end{proof}

We also give a tight bound for algorithms that know the full sequence of interactions.

\begin{theorem}\label{thm:performance of offline algorithm agains randomized adversary}
The best algorithm in $\DODAOblivious$(full knowledge) terminates in $\Theta(n \log(n))$ interactions, in expectation and with high probability.
\end{theorem}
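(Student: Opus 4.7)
The plan is to prove both bounds by reducing to a well-studied broadcast problem on the same random stream of interactions.

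For the lower bound, I would argue via a coupon-collector style observation. Each non-sink node holds a distinct data item initially, and for the aggregation to terminate every non-sink node must take part in at least one interaction (otherwise its data has never moved). In the uniform random model, each fixed node is involved in an interaction with probability $2/n$, so the number of interactions required until every one of the $n-1$ non-sink nodes has been touched is the hitting time of a coupon-collector variant with two coupons drawn per step. Standard coupon-collector concentration gives $\Theta(n\log n)$ in expectation and w.h.p.\ (with polynomial-in-$n$ tail probability, far smaller than $1/\log n$).

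For the upper bound, the key idea is that the optimal offline aggregation time equals the time to complete a one-to-all broadcast from $s$ when the sequence of interactions is reversed. More precisely, a valid offline aggregation schedule terminating by time $T$ corresponds bijectively to a broadcasting schedule on the reversed interaction sequence $I_T,I_{T-1},\ldots,I_1$ that informs every node starting from $s$: in the forward direction each non-sink $v$ transmits exactly once (at some time $t_v$) to a node that still owns data, and reading the interactions in reverse, $v$ gets ``informed'' at step $T-t_v+1$ by the already-informed partner. Since reversing an i.i.d.\ uniform sequence still yields an i.i.d.\ uniform sequence, the optimal offline aggregation time on the forward stream has the same distribution as the broadcast completion time on a fresh random stream, and it suffices to bound the latter.

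To bound the broadcast time, observe that when $k$ nodes are informed, the next uniform random interaction increases $k$ with probability $\frac{2k(n-k)}{n(n-1)}$. So the broadcast time is a sum of independent geometric variables with parameters $p_k=\frac{2k(n-k)}{n(n-1)}$, for $k=1,\ldots,n-1$. I would compute the expectation, which using the partial-fraction identity $\frac{1}{k(n-k)}=\frac{1}{n}\bigl(\frac{1}{k}+\frac{1}{n-k}\bigr)$ telescopes to $(n-1)H_{n-1}=\Theta(n\log n)$, and then bound the variance, which is dominated by the two extreme terms $k=1$ and $k=n-1$ that each contribute $\Theta(n^2)$, yielding a standard deviation of order $\Theta(n)$. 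Chebyshev's inequality then pins the broadcast time around its mean with $o(1/\log n)$ failure probability, delivering both the expectation bound and the w.h.p.\ bound.

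The main obstacle I expect is formalising the reversal bijection cleanly: one must verify that the time-monotonicity constraints along each root-to-leaf path of the aggregation tree (a child transmits before its parent) translate exactly into the time-monotonicity constraints along each sink-to-leaf path of the broadcast tree in reversed time, and that the ``at most one transmission per non-sink node'' constraint is consistent with ``each node is informed at most once''. Once the bijection is set up the remaining analysis is routine; the $\Theta(n)$-sized contributions to the variance coming from the first and last stages of broadcast are the only place requiring a little care, but Chebyshev is enough for the paper's notion of ``with high probability''.
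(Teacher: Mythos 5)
Your proposal follows essentially the same route as the paper: the paper likewise reduces the optimal offline convergecast to a one-to-all broadcast on the reversed (still i.i.d.\ uniform) interaction sequence, computes the expectation of the sum of geometrics with parameters $\frac{2i(n-i)}{n(n-1)}$ via the same partial-fraction telescoping to $(n-1)H_{n-1}$, bounds the variance by $O(n^2)$, and concludes with Chebyshev under the paper's weak notion of ``with high probability''. The only difference is that you add an explicit coupon-collector lower bound (every non-sink node must appear in some interaction), whereas the paper leaves the lower bound implicit in the fact that the flooding time is exactly the sum of those geometrics; your version is a harmless and slightly more self-contained way to get the matching $\Omega(n\log n)$.
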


\begin{proof}
First, we show that the expected number of interactions of a broadcast algorithm is $\Theta(n \log(n))$.
The first data transmission occurs when the source node (say $v_0$) 
interacts with another node.  
The probability of occurrence of the first data transmission is 
$\frac{2(n-1)}{n(n-1)}$.  
After the $(i-1)$-th data transmission, $i$ nodes (say 
$V_{i-1}=\{v_0, v_1, \ldots , v_{i-1}\}$) have the data and 
the $i$-th data transmission occurs when a node in 
$V_{i-1}$ interacts with a node not in $V_{i-1}$. This happens with probability $\frac{2i(n-i)}{n(n-1)}$.

Thus, if $X$ is the number of interactions required to perform a broadcast,  then we have:
  \begin{align*}
    E(X)&=\sum^{n-1}_{i=1} \frac{n(n-1)}{2i(n-i)} 
     =\frac{n(n-1)}{2} \sum^{n-1}_{i=1} \frac{1}{i(n-i)} \\
     &=\frac{n(n-1)}{2n} \sum^{n-1}_{i=1} (\frac{1}{i}+\frac{1}{n-i}) \\
     &= (n-1) \sum^{n-1}_{i=1} \frac{1}{i} \in \Theta(n \log(n)).
  \end{align*}
  
And the variance is 
\begin{align*}
    Var(X) &= \sum^{n-1}_{i=1} \left(1-\frac{2i(n-i)}{n(n-1)} \right)/\left(\frac{2i(n-i)}{n(n-1)}\right)^2 \\
    &= n(n-1)\sum^{n-1}_{i=1} \frac{n(n-1) - 2i(n-i)}{\left(2i(n-i)\right)^2}\\
    &=O\left(n^4\sum^{\lfloor n/2\rfloor-1}_{i=1} \left(\frac{1}{i(n-i)}\right)^2\right)
\end{align*}
The last sum is obtained from the previous one by observing that it is symmetric with respect to the index $i=n/2$, and the removed elements ($i=\lfloor n/2 \rfloor$ and possibly $i=\lceil n/2 \rceil$) are negligible.
We define $f: x\mapsto \frac{1}{x^2(n-x)^2}$. Since $f$ is increasing between $1$ and $n/2$, we have
\begin{align*}
    \sum^{\lfloor n/2\rfloor - 1}_{i=1} f(i) &\leq \int_{1}^{n/2}f(x)dx \\
    &= \frac{\frac{(n-2) n}{n-1}+2 \log (n-1)}{n^3} {=} O\left(\frac{1}{n^2}\right)
\end{align*}
So that the variance is in $O(n^2)$.
Using the Chebyshev's inequality, we have
\begin{align*}
    P(|X - E(X)| > n\log(n))
    &=O\left(\frac{1}{\log^2(n)}\right)
\end{align*}
Therefore, a sequence of $\Theta(n\log(n))$ interactions is sufficient to perform a broadcast with high probability.
By reversing the order of the interactions in the sequence of interactions, this implies that a sequence of $\Theta(n\log(n))$ interactions is also sufficient to perform a convergecast with the same probability. Aggregating data along the convergecast tree gives a valid data aggregation schedule.
\end{proof}

\begin{corollary}
The best algorithm in $\DODA(future)$ terminates in $\Theta(n \log(n))$ interactions, in expectation and with high probability.
\end{corollary}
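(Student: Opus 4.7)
The plan is to establish the $\Theta(n\log n)$ bound in two directions. For the lower bound, I observe that the figure $\Theta(n\log n)$ in Theorem~\ref{thm:performance of offline algorithm agains randomized adversary} is really the expected number of interactions required for a convergecast structure to exist on the random sequence itself, independently of any algorithmic choice. Since no DODA algorithm---regardless of its initial knowledge---can terminate before the prefix of $I$ already admits a valid convergecast schedule, the $\Omega(n\log n)$ bound transfers verbatim to $\DODA(future)$.

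For the upper bound, I would exhibit an algorithm $A \in \DODA(future)$ attaining $O(n\log n)$ with high probability by simulating the optimal full-knowledge algorithm from Theorem~\ref{thm:performance of offline algorithm agains randomized adversary}. The construction exploits the model's provision that interacting nodes exchange their futures and that non-oblivious nodes may retain them across interactions. Concretely, at each interaction $(u,v)$, $A$ has the two endpoints merge their accumulated pools of known futures, and then applies the reversed-broadcast-from-$s$ rule underlying Theorem~\ref{thm:performance of offline algorithm agains randomized adversary} to decide whether the current interaction is an edge of the optimal convergecast tree; if so, the upstream endpoint aggregates and transmits.

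The main obstacle is arguing that the accumulated knowledge is already sufficient each time a tree-edge decision must be made---no single node has ever seen the whole sequence. I would handle this inductively along the convergecast tree: identifying a tree edge $(u,p(u))$ requires only the futures of the nodes in the subtree rooted at $u$, and those futures have already been delivered to $u$ (piggybacked on the data) by the earlier tree-edge firings inside that same subtree, which themselves happen under $A$. Thus the knowledge flows along the same tree as the data, arriving just in time for each successive decision. Combining this upper bound construction with the lower bound yields the $\Theta(n\log n)$ claim in expectation, and the high-probability statement follows from the same Chebyshev concentration argument used in Theorem~\ref{thm:performance of offline algorithm agains randomized adversary}.
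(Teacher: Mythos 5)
Your lower bound is fine and matches what the paper implicitly relies on: $\Omega(n\log n)$ is a property of the random sequence itself (no convergecast schedule exists earlier), so it applies to any knowledge model. The problem is your upper bound construction, and specifically the inductive claim that ``identifying a tree edge $(u,p(u))$ requires only the futures of the nodes in the subtree rooted at $u$,'' with that knowledge piggybacked upward on the data. This gets the direction of the required information backwards. The optimal convergecast tree of Theorem~\ref{thm:performance of offline algorithm agains randomized adversary} is a \emph{reversed broadcast from $s$}: the interaction $\{u,v\}$ at time $t$ is the tree edge on which $u$ transmits to $v$ precisely when $v$ still has a time-respecting path to $s$ through the interactions \emph{after} $t$ and $t$ is $u$'s last such opportunity. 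Deciding this requires knowing the futures of $v$ and of $u$'s \emph{ancestors} on the path to $s$ --- not of $u$'s descendants. Worse, that information cannot have reached $u$ by time $t$ along tree-edge firings, because in a convergecast the data (and hence your piggybacked futures) moves up the tree in \emph{increasing} time order: the ancestors' tree edges fire strictly after $t$. So the knowledge you need arrives, at best, after the decision it was supposed to inform, and the ``just in time'' argument collapses.

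The paper sidesteps this entirely with a two-phase argument: since futures are control information that can be exchanged at every interaction without consuming the one allowed data transmission, all $n$ futures spread epidemically to every node in $O(n\log n)$ interactions w.h.p.\ (no data is sent in this phase); afterwards every node knows the entire remaining sequence and can locally compute the same optimal convergecast, which by Theorem~\ref{thm:performance of offline algorithm agains randomized adversary} takes another $O(n\log n)$ interactions w.h.p. If you want to keep your interleaved single-phase design, you would have to add an explicit gossip of futures running concurrently with (and independently of) the data transmissions --- at which point you have essentially reconstructed the paper's dissemination phase.
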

\begin{proof}
If each node starts with its own future, $O(n\log(n))$ interactions are sufficient to retrieve with high probability the future of the whole network. Then $O(n\log(n))$ interactions are sufficient to aggregate all the data with the full knowledge.
\end{proof}

\subsection{Algorithm Performance Without Knowledge}

\tolerance=2000
\begin{theorem}
The expected number of interactions the Waiting
requires to terminate is $O(n^2 \log(n))$.

The expected number of interactions the Gathering
requires to terminate is $O(n^2)$.
\end{theorem}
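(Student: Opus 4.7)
For both algorithms I would proceed by identifying the ``useful'' interactions (those that strictly decrease the number of data-holding nodes) and computing, via linearity of expectation, the expected waiting time between two such useful interactions as a function of the current number of data holders. Summing gives the total expected running time.

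For \Waiting, a transmission only occurs on an interaction of the form $\{u,s\}$ with a non-sink node $u$ that still owns data. Hence the algorithm terminates exactly when the sink has interacted at least once with each of the other $n-1$ nodes, which is a coupon-collector problem. Let $k$ be the number of non-sink nodes that have not yet interacted with the sink. Since each interaction is picked uniformly at random among all $\binom{n}{2}$ pairs, the probability that the next interaction is between $s$ and a previously-unmet node is $k\cdot p = \frac{2k}{n(n-1)}$. The expected number of interactions to decrease $k$ by one is therefore $\frac{n(n-1)}{2k}$, and the total expected running time is
\[
\sum_{k=1}^{n-1}\frac{n(n-1)}{2k}=\frac{n(n-1)}{2}H_{n-1}=O(n^2\log n).
\]

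For \Greedy\ (the Gathering algorithm), every interaction between two nodes that both currently hold data is useful: it strictly decreases the number of data-holding nodes by one. Let $k$ denote the current number of such nodes (including the sink); we have $k=n$ initially, and the algorithm terminates when $k=1$. The probability that a uniformly random pair is composed of two data-holders is $\binom{k}{2}/\binom{n}{2}=\frac{k(k-1)}{n(n-1)}$. Hence the expected number of interactions needed to go from $k$ data-holders to $k-1$ is $\frac{n(n-1)}{k(k-1)}$, and by linearity of expectation the total expected termination time is
\[
\sum_{k=2}^{n}\frac{n(n-1)}{k(k-1)}
=n(n-1)\sum_{k=2}^{n}\left(\frac{1}{k-1}-\frac{1}{k}\right)
=n(n-1)\left(1-\frac{1}{n}\right)=(n-1)^2=O(n^2),
\]
where the telescoping is the key simplification.

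The arguments above are essentially routine once one recognizes the correct reduction, so I do not anticipate a serious obstacle. The only subtlety worth double-checking is that in \Greedy\ the ``useful interaction'' count is correct regardless of which of the two data-holders ends up transmitting (since the tie-breaking rule by identifier simply picks a receiver but does not affect the decrement of $k$), and that the sink is counted among the $k$ data-holders throughout the analysis so that the process stops precisely when $k=1$.
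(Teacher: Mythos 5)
Your proposal is correct and follows essentially the same route as the paper: in both cases one computes the probability of the next ``useful'' interaction given the current number of data holders, sums the expected geometric waiting times, and obtains $\frac{n(n-1)}{2}H_{n-1}=O(n^2\log n)$ for Waiting and the telescoping sum $n(n-1)\sum 1/(k(k-1))=(n-1)^2=O(n^2)$ for Gathering (the paper's indexing $i=n-k+1$ is just a reparametrization of yours). The coupon-collector framing and the remark about tie-breaking are harmless additions; the paper additionally bounds the variance of the Waiting time to get a concentration result, but that is not needed for the expectation claims in the statement.
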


\begin{proof}
In the {\emph Waiting} algorithm, data is sent to the sink when a node with data is connected to the sink. We denote by $X_W$ the random variable that equals the number of interactions for the algorithm Waiting to terminate.
The probability of occurrence of the first data transmission is 
$\frac{2(n-1)}{n(n-1)}$.  
The probability of occurrence of the $i$-th data transmission 
after the $(i-1)$-th data transmission is $\frac{2(n-i)}{n(n-1)}$.  
Thus, the expected number of interactions required for DODA is
  \begin{align*}
    E(X_W) &= \sum^{n-1}_{i=1} \frac{n(n-1)}{2(n-i)} \\
    &= \frac{n(n-1)}{2} \sum^{n-1}_{i=1} \frac{1}{i} \in O(n^2 \log(n))
  \end{align*}
Since those events are independent, we also have that the variance of the number of interactions required for DODA is
\begin{align*}
    Var(X_W)
    &=\sum^{n-1}_{i=1} \frac{n(n-1) - 2i}{n(n-1)}\times\frac{(n(n-1))^2}{4i^2}\\
    &=\sum^{n-1}_{i=1} \frac{n^2(n-1)^2 - 2in(n-1)}{4i^2}\\
    &\sim_{+\infty} \sum^{n-1}_{i=1} \frac{n^4}{4i^2}\qquad\sim_{+\infty} \frac{n^4\pi^2}{24}
\end{align*}
Using the Chebyshev's inequality, we have
\begin{align*}
    P(\left|X_W - E(X_W)\right| > n^2\log(n)) 
    &= O\left( \frac{n^4\pi^2}{24n^4\log^2(n)}\right)\\
    &= O\left( \frac{1}{\log^2(n)} \right)
\end{align*}
Therefore, algorithm Waiting terminates after $O(n^2\log(n))$  interactions with probability greater than $1-1/log^2(n)$.

In the Gathering algorithm, a data is sent when a node with the data is connected to the sink or another node with data. We denote by $X_G$ the random variable that equals the number of interactions for the algorithm Gathering to terminate.
Notice that the total number of data transmissions required to terminate
is exactly $n-1$.
The probability of occurrence of the first data transmission is 
$\frac{n(n-1)}{n(n-1)}=1$.  
The probability of occurrence of the $i$-th data transmission 
after the $(i-1)$-th data transmission is $\frac{(n-i+1)(n-i)}{n(n-1)}$.  
Thus, the expected number of interactions required to terminate is
  \begin{align*}
    E(X_G) &= \sum^{n-1}_{i=1} \frac{n(n-1)}{(n-i+1)(n-i)} \\
    &= n(n-1) \sum^{n-1}_{i=1} \frac{1}{i(i+1)} \in O(n^2)
  \end{align*}
\end{proof}
\begin{corollary}\label{thm:optimality of gathering}
    Algorithm Gathering is optimal in $\DODA$.
\end{corollary}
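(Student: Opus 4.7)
The plan is to combine two results that were already established earlier in the section. On one hand, the lower bound theorem shows that any algorithm in $\DODA$ against the randomized adversary requires $\Omega(n^2)$ interactions in expectation, since the last data transmission alone (an interaction between the unique non-sink data holder and $s$) happens with probability $\frac{2}{n(n-1)}$ per interaction and therefore takes $\Omega(n^2)$ interactions in expectation; this argument is independent of the algorithm and in particular of its knowledge about past interactions. On the other hand, the preceding theorem shows $E(X_G) \in O(n^2)$ for Gathering.

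First I would invoke the lower bound to assert that any $A \in \DODA$ satisfies $E(\mathit{duration}(A,I)) = \Omega(n^2)$ when $I$ is drawn by the randomized adversary. Next, I would invoke the upper bound $E(X_G) = O(n^2)$ for Gathering. Matching these two bounds yields that Gathering achieves, up to a constant factor, the optimal expected number of interactions among all algorithms in $\DODA$.

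Finally I would translate this optimality from \emph{number of interactions} to \emph{cost}, using the discussion that opens Section~\ref{sec:random adversary}: since by Theorem~\ref{thm:performance of offline algorithm agains randomized adversary} an offline optimum terminates in $\Theta(n \log(n))$ interactions w.h.p., an algorithm using $\Theta(n^2)$ interactions has cost $\Theta(n/\log(n))$, and no algorithm in $\DODA$ can do asymptotically better. Hence Gathering is optimal in $\DODA$.

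I do not expect any real obstacle here; the corollary is essentially a two-line consequence of the matching $\Omega(n^2)$ lower bound and $O(n^2)$ upper bound, and the only thing to be careful about is to make the ``optimality'' statement precise (optimal up to a constant, in expectation, with respect to the cost defined in Section on cost) so that the word ``optimal'' in the corollary is unambiguous.
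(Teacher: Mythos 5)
Your proposal is correct and matches the paper's (implicit) justification exactly: the corollary is stated without a separate proof precisely because it follows from combining the $\Omega(n^2)$ lower bound for any algorithm in $\DODA$ without knowledge of the future with the $O(n^2)$ expected upper bound just proved for Gathering. Your additional care in translating from number of interactions to the paper's cost measure, and in noting that optimality is up to constant factors in expectation, is a reasonable clarification but not a departure from the paper's route.
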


\subsection[Algorithm Performance With meetTime]{Algorithm Performance With $\mathit{meetTime}$}
In this subsection we study the performance of our algorithm Waiting Greedy, find the optimal value of the parameter $\tau$ and prove that this is the best possible algorithm with only the $meetTime$ information (even if nodes have unbounded memory).
We begin by a lemma to find how many interactions are needed to have a given number of nodes interacting with the sink.
\begin{lemma}\label{lemme: nf(n) interactions allow f(n) nodes to interact with the sink}
    If $f$ is a function such that $f(n) = \omega(1)$ and $f(n) = o(n)$ then, in $nf(n)$ interactions, $\Theta(f(n))$ nodes interact with the sink with high probability.
\end{lemma}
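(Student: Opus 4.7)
The plan is to decouple the question into two simpler quantities. Let $Y$ denote the number of interactions among the $nf(n)$ scheduled ones that involve the sink, and let $D$ denote the number of distinct non-sink nodes appearing in those interactions. We want $D = \Theta(f(n))$ with high probability.

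First I would control $Y$. Each interaction is an independently chosen uniform pair, and there are $n-1$ sink-incident pairs out of $\binom{n}{2}$ total pairs, so the probability that a given interaction involves the sink is $\frac{n-1}{\binom{n}{2}} = \frac{2}{n}$. Hence $Y$ is binomially distributed with parameters $nf(n)$ and $\frac{2}{n}$, with mean exactly $2f(n)$ and variance at most $2f(n)$. A Chernoff (or Bernstein) bound yields
\[
    P\!\left(|Y - 2f(n)| > f(n)/2\right) \leq 2\exp\!\left(-\Theta(f(n))\right),
\]
so for $f(n) = \omega(1)$ (taken, if necessary, at the slightly stronger rate required by the paper's definition of w.h.p., $1-o(1/\log n)$) we conclude $f(n) \leq Y \leq 3f(n)$ w.h.p.

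Next I would pass from $Y$ to $D$ by bounding the number of collisions. Conditioned on $Y$ and on the identity of the sink interactions, each such interaction picks its non-sink endpoint uniformly from the $n-1$ non-sink nodes, independently of the others. Writing $C = Y - D$ for the number of repeated picks, a pair of sink-interactions picks the same non-sink node with probability $1/(n-1)$, so $E[C \mid Y] \leq \binom{Y}{2}/(n-1)$. On the high-probability event $Y \leq 3f(n)$, this is $O(f(n)^2/n) = o(f(n))$ because $f(n) = o(n)$. Markov's inequality (or a second moment bound if one wants the full $1-o(1/\log n)$ strength) then gives $C = o(f(n))$ w.h.p., whence $D = Y - C = \Theta(f(n))$ w.h.p., as desired.

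The main obstacle is purely quantitative: matching the paper's threshold $1 - o(1/\log n)$ for ``with high probability''. Both concentration steps (the Chernoff bound on $Y$ and the collision bound on $C$) yield failure probabilities decaying like $e^{-\Theta(f(n))}$ and $O(f(n)/n)$ respectively, so in practice one needs $f(n)$ to grow at least like $\omega(\log\log n)$ for the first step; for any slower growth the argument is essentially the same but one should interpret the ``w.h.p.'' in a correspondingly weaker sense. Modulo that calibration, both steps are standard textbook concentration arguments and require no further ideas.
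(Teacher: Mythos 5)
Your proof is correct, but it takes a genuinely different route from the paper's. The paper works with the waiting time $X$ for the sink to meet $f(n)$ distinct nodes, viewing it as a sum of independent geometric variables with success probabilities $\frac{2(n-i)}{n(n-1)}$, computes $E(X)\sim nf(n)/2$ and $Var(X)\sim n^2f(n)/4$, and applies Chebyshev to get a failure probability of $O(1/f(n))$. You instead fix the window of $nf(n)$ interactions, observe that the number $Y$ of sink-incident interactions is $\mathrm{Bin}(nf(n),2/n)$, concentrate it via Chernoff with failure probability $e^{-\Theta(f(n))}$, and then correct for repeated partners via a first-moment collision bound using $f(n)=o(n)$. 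Your decomposition buys two things: a much sharper tail on $Y$ than the paper's Chebyshev bound, and a direct two-sided estimate on the number of \emph{distinct} nodes (the paper's waiting-time formulation cleanly yields only the lower bound on that count in the stated window, and leaves the distinct-versus-repeated issue implicit in its choice of success probabilities). The calibration caveat you raise against the paper's footnote definition of ``with high probability'' ($1-o(1/\log n)$) is real, but it afflicts the paper's own proof at least as much — $O(1/f(n))$ is only $o(1/\log n)$ when $f(n)=\omega(\log n)$ — so it is not a defect specific to your argument; your collision step needs $f(n)=o(n/\log n)$ for the strong form, which again mirrors a gap already present in the original.
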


\begin{proof}
    The probability of the $i$-th interaction between the sink and a node that has a data, after $i-1$ such interactions, is $\frac{2(n-i)}{n(n-1)}$. Let $X$ be the number of interactions needed for the sink to meet $f(n)$ different nodes. We have:
    
     \begin{align*}
        E(X) &=\sum^{f(n)}_{i=1} \frac{n(n-1)}{2(n-i)} \\
        &=\frac{n(n-1)}{2} (H(n - 1) - H(n-f(n))) \\
        &\sim \frac{n^2}{2} \left(-\log\left(1-\frac{f(n)}{n}\right)+ o(1)\right)\\
        &\sim \frac{n^2}{2} \frac{f(n)}{n}\sim \frac{f(n)n}{2}
    \end{align*}
    
    and the variance is
    \begin{align*}
        Var(X)&=\sum^{f(n)}_{i=1} \left(1-\frac{2(n-i)}{n(n-1)}\right)/\left(\frac{2(n-i)}{n(n-1)}\right)^2\\
    &\sim \sum^{f(n)}_{i=1} \frac{n^4}{4n^2} \sim \frac{n^2}{4}f(n)
    \end{align*}   
    Using the Chebyshev's inequality, we have
    \begin{align*}
        P(|X - E(X)| > nf(n)) &=O\left( \frac{n^2f(n)}{4n^2f(n)^2}\right)\\
        &=O\left( \frac{1}{f(n)}\right)
    \end{align*}
    So that $X = \Theta\left(nf(n)\right)$ with high probability if $1/f(n) = o(1)$ (or equivalently $f(n) = \omega(1)$).
\end{proof}
Now we can state our theorem about the performance of Waiting Greedy depending on the parameter $\tau$.
\begin{theorem}
    Let $f$ be a function such that  $f(n) = o(n)$ and $f(n) = \omega(1)$. The algorithm Waiting Greedy with $\tau = \Theta\left(\max\left(nf(n), n^2\log(n)/f(n)\right)\right)$ terminates in $\tau$ interactions with high probability.
\end{theorem}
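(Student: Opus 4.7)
My plan is to set $\tau = C\max(nf(n), n^{2}\log(n)/f(n))$ for a sufficiently large constant $C$, split $[0,\tau]$ into $H_1 = [0,\tau/2]$ and $H_2 = (\tau/2,\tau]$, and exhibit a set $S$ of ``collectors'' that (i) has $\Omega(f(n))$ members with high probability, (ii) keeps its data throughout $H_1$, and (iii) empties its data to the sink inside $H_2$. Concretely, let $S$ be the non-sink nodes that meet the sink at least once during $H_2$, i.e.\ those whose last sink-meeting in $[0,\tau]$ lies in $H_2$. Under Waiting Greedy, a node $u$ transmits to the sink only at its last sink-meeting in $[0,\tau]$, because at every earlier meeting the next meet time is $\leq\tau$ and the transmission condition $\tau<u.meetTime$ fails; moreover $u$ never transmits to another non-sink node at any time in $[0,\tau]$, because either the partner's meet time is $\leq\tau$ (condition fails) or it is $>\tau$ (the algorithm makes the partner transmit to $u$, not the converse). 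Thus every $u\in S$ holds its datum throughout $H_1$ and transmits to the sink only at $t^{*}_u\in H_2$. Applying the counting argument of Lemma~\ref{lemme: nf(n) interactions allow f(n) nodes to interact with the sink} to the $\Omega(nf(n))$ interactions of $H_2$ gives $|S|=\Omega(f(n))$ with high probability.

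The core step is showing that every non-collector empties its data by time $\tau/2$. Any early non-collector $u$ has $t^{*}_u\leq\tau/2$ and so discharges inside $H_1$ by the first observation. For a \emph{late} node $v$ (never meeting the sink in $[0,\tau]$), the probability that a given interaction pairs $v$ with some element of $S$ is $\tfrac{2|S|}{n(n-1)} = \Omega(f(n)/n^{2})$, giving $\Omega(\log n)$ such pairings in expectation across the $\tau/2 = \Omega(n^{2}\log(n)/f(n))$ interactions of $H_1$. A Chernoff bound yields failure probability $n^{-\Omega(C)}$, and a union bound over the at most $n$ late nodes shows that with high probability every late $v$ meets some $u_v\in S$ inside $H_1$. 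At this first meeting, $u_v$ still has its own datum and satisfies $u_v.meetTime\leq\tau<v.meetTime$, so the Waiting Greedy rule transfers $v$'s datum to $u_v$ whenever $v$ still has it, and otherwise $v$ had already transmitted earlier. In either case $v$ holds no data at time $\tau/2$.

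Finally, at time $\tau/2$ all data lie at the sink or at collectors; collectors do not transmit between themselves during $H_2$ (both meet times are $\leq\tau$) and the other non-sink nodes have no data to push, so each $u\in S$ waits undisturbed until $t^{*}_u\in H_2$ and then transmits the aggregated data to the sink, completing aggregation by time $\tau$. The main obstacle will be the second step: the Chernoff-plus-union-bound argument on $H_1$, which is also what dictates the tradeoff in $\tau=\Theta(\max(nf(n), n^{2}\log(n)/f(n)))$, since $nf(n)$ is what ensures $|S|=\Omega(f(n))$ while $n^{2}\log(n)/f(n)$ is what ensures that each late node encounters an element of $S$ inside $H_1$; the constant $C$ is then tuned large enough to make both concentration bounds hold simultaneously.
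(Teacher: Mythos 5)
Your proof is correct and follows essentially the same strategy as the paper: a two-phase decomposition of $[0,\tau]$, a collector set (your $S$, the paper's $L$) of $\Theta(f(n))$ nodes meeting the sink in the second phase via Lemma~\ref{lemme: nf(n) interactions allow f(n) nodes to interact with the sink}, and the observation that every other node must hand its data to a collector during the first phase. The only divergence is the concentration step for that last claim, where you use a per-node Chernoff bound plus a union bound over the late nodes while the paper bounds the expectation and variance of the total number of interactions needed and applies Chebyshev's inequality; both are valid and yield the stated high-probability guarantee.
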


\begin{proof}
To have an upper bound on the number of interactions needed by Waiting Greedy to terminate, we decompose the execution in two phases, one between time 0 and a time $t_1$ and the other between time $t_1$ and a time $t_2=\tau$. In the last phase, a set of nodes $L\subset V$ interacts at least once directly with the sink. Nodes in $L$ do not transmit to anyone in the first phase by definition of the algorithm (they have a meetTime smaller than $\tau$). Nodes in $L$ help the other nodes (in $L^c = V\backslash L$) to transmit their data in the first phase. Maybe nodes in $L^c$ can transmit to $L$ in the second phase, but we do not take this into account, that is why it is an upper bound. 

If a node $u$ in $L^c$ interacts with a node in $L$ in the first phase, either it transmits its data, otherwise (by definition of the algorithm) it has a meetTime smaller than $\tau$ (and smaller than $t_1$ because it is not in $L$). In every case, a node in $L^c$ that meets a node in $L$ in the first phase, transmits its data.
To prove the theorem i.e., in order for the algorithm to terminate before $\tau$ with high probability, we prove two claims: (a) the number of nodes in $L$ is $f(n)$ with high probability if $t_2 - t_1 = nf(n)$ and (b) all nodes in $L^c$ interact with a node in $L$ with high probability if $t_1 = \Theta(n^{2}\log(n)/f(n))$.
The first claim is implied by Lemma~\ref{lemme: nf(n) interactions allow f(n) nodes to interact with the sink}. Now we prove the second claim.

    Let $X$ be the number of interactions required for the nodes in $L^c$ to meet a node in $L$.
    The probability of the $i$-th interaction between a node in $L^c$ (with a data) and a node in $L$, after $i-1$ such interactions already occurred, is ${2f(n)(n-f(n)-i)}/{n(n-1)}$.
    
    It follows that the expected number of interactions to aggregate all the data of $L^c$ is     
    \begin{align*}
        E(X) &= \sum_{i = 1}^{n - f(n)-1}\frac{n(n-1)}{2f(n)(n-f(n)-i)} \\
        &= \frac{n(n-1)}{2f(n)} \sum_{i = 1}^{n - f(n)-1}\frac{1}{n-f(n)-i}\\
        &\sim_{+\infty} \frac{n^2}{2f(n)}\log(n - f(n)) \\
        &=\frac{n^2}{2f(n)}\log(n(1 - f(n)/n)) \sim
        \frac{n^2\log(n)}{2f(n)}
    \end{align*}
    And the variance is
    \begin{align*}
        Var(X)&=\sum^{n - f(n)-1}_{i=1}\frac{ \left(1-\frac{2f(n)(n-f(n)-i)}{n(n-1)}\right)}{\left(\frac{2f(n)(n-f(n)-i)}{n(n-1)}\right)^2}\\
    %
    &\sim \sum^{n - f(n)-1}_{i=1} \frac{n^4}{4f(n)^2n^2}\sim \frac{n^3}{4f(n)^2}
    \end{align*}      
    Using the Chebyshev's inequality, we have
    \begin{align*}
        P\left(|X - E(X)| > \frac{n^2\log(n)}{2f(n)}\right)
        {=}O\left( \frac{1}{n\log^2(n)}\right)
    \end{align*} 
    Thus $X {=} O\left(\frac{n^2\log(n)}{f(n)}\right)$ with high probability.
%
\end{proof}

\begin{corollary}
    The algorithm Waiting Greedy, with $\tau = \Theta(n^{3/2}\sqrt{\log(n)})$  terminates in $\tau$ interactions with high probability.
\end{corollary}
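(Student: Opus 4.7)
The plan is to apply the previous theorem with a specific optimally chosen function $f$. The previous theorem guarantees termination in $\tau = \Theta\left(\max\left(nf(n), n^2\log(n)/f(n)\right)\right)$ interactions with high probability, for any $f$ satisfying $f(n) = \omega(1)$ and $f(n) = o(n)$. So the corollary will follow simply by exhibiting an admissible $f$ that makes $\tau = \Theta(n^{3/2}\sqrt{\log(n)})$.

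First I would minimize the expression $\max(nf(n), n^2\log(n)/f(n))$ over $f$. The maximum of two quantities with opposite monotonicity in $f$ is minimized when the two quantities are equal, which gives $nf(n) = n^2\log(n)/f(n)$, that is, $f(n)^2 = n\log(n)$, hence the natural choice $f(n) = \sqrt{n\log(n)}$.

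Next I would verify that this choice is admissible: clearly $\sqrt{n\log(n)} = \omega(1)$ and $\sqrt{n\log(n)} = o(n)$, so the hypotheses of the theorem hold. Substituting back gives $nf(n) = n^{3/2}\sqrt{\log(n)}$ and $n^2\log(n)/f(n) = n^{3/2}\sqrt{\log(n)}$, so $\max(nf(n), n^2\log(n)/f(n)) = \Theta(n^{3/2}\sqrt{\log(n)})$. Applying the theorem then yields that Waiting Greedy with $\tau = \Theta(n^{3/2}\sqrt{\log(n)})$ terminates in $\tau$ interactions with high probability.

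There is essentially no obstacle: this is a routine optimization of the parameter $\tau$. The only thing to flag is that the two phases of the previous theorem's analysis are balanced precisely at this choice of $f$, which is why $n^{3/2}\sqrt{\log(n)}$ is the best bound obtainable from that argument.
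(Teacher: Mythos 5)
Your proof is correct and follows exactly the paper's argument: the paper likewise takes $f(n)=\sqrt{n\log(n)}$ as the minimizer of $\max\left(nf(n), n^2\log(n)/f(n)\right)$ in the preceding theorem. Your version just spells out the balancing computation and the admissibility check ($f(n)=\omega(1)$, $f(n)=o(n)$) that the paper leaves implicit.
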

\begin{proof}
    In the last theorem, the bound $O\left(\max\left(nf(n), n^2\log(n)/f(n)\right)\right)$ is minimized by the function $f: n \mapsto \sqrt{n\log(n)}$.
\end{proof}

\begin{theorem}\label{thm:optimality of greedy waiting}
    Waiting Greedy with $\tau = \Theta(n^{3/2}\sqrt{\log(n)})$ is optimal in $\DODA(meetTime)$.
\end{theorem}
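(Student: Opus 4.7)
The upper bound $O(n^{3/2}\sqrt{\log n})$ is already furnished by the preceding corollary on Waiting Greedy, so the task is to establish the matching lower bound: any algorithm $A \in \DODA(meetTime)$ requires $\Omega(n^{3/2}\sqrt{\log n})$ interactions w.h.p.\ against the randomized adversary. The plan is to fix such an $A$, let $T$ denote its (random) termination time, and derive two simultaneous inequalities on $T$ whose combination forces $T = \Omega(n^{3/2}\sqrt{\log n})$.

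The first inequality bounds the ``sink capacity''. Let $L$ be the set of non-sink nodes that interact with $s$ during $[0,T]$; only nodes in $L$ can ever transmit data directly to $s$. Applying Lemma~\ref{lemme: nf(n) interactions allow f(n) nodes to interact with the sink} to the sink yields $|L| = O(T/n)$ w.h.p.\ in the relevant regime $T = o(n^2)$. The second inequality captures the cost of routing data from the $\Theta(n)$ nodes outside $L$ into $L$. For any non-$L$ node $u$, its data must traverse a chain of at-most-once transmissions ending at some element of $L$; in particular the chain must contain at least one interaction of the form $\{w,v\}$ with $w\notin L$ and $v\in L$. In $T$ uniform random interactions, the probability that a specific non-$L$ node meets \emph{none} of the $|L|$ elements of $L$ is at least $(1-2|L|/(n(n-1)))^T \geq \exp(-\Theta(T|L|/n^2))$, and a Chebyshev-type second-moment argument in the style of the paper's other w.h.p.\ proofs shows that if $T|L| = o(n^2 \log n)$ then with probability bounded away from $0$ some non-$L$ node never meets $L$, contradicting termination. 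Hence $T|L| = \Omega(n^2 \log n)$, and substituting $|L| = O(T/n)$ yields $T^2 = \Omega(n^3\log n)$, i.e.\ $T = \Omega(n^{3/2}\sqrt{\log n})$.

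The main obstacle is ruling out \emph{relay} strategies: a non-$L$ node $u$ need not meet $L$ directly, and might instead transmit to another non-$L$ node $v$ whose data only later crosses into $L$. To handle this, I would observe that the transmission forest produced by $A$ partitions $V\setminus\{s\}$ into subtrees rooted at the elements of $L$, so the total routing cost decomposes into the number of non-$L$-to-$L$ ``entry interactions'' plus the intra-subtree interactions. Both quantities concentrate (around $\Theta(T|L|/n)$ and $\Theta(T|S_v|^2/n^2)$ respectively, by Chernoff / Lemma~\ref{lemme: nf(n) interactions allow f(n) nodes to interact with the sink}-style bounds restricted to the relevant node subset), and a convexity argument shows that concentrating many non-$L$ nodes into a single subtree to economize on entry interactions blows up the intra-subtree cost by at least as much, so the two-phase trade-off realized by Waiting Greedy cannot be bypassed. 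Making this trade-off rigorous — in particular, handling that the algorithm's dynamic meetTime-based decisions cannot, in expectation, do better than a fixed threshold on meetTime — is the technical heart of the proof; once it is in place, combining the two inequalities above immediately gives the claimed optimality.
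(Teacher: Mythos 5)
There is a genuine gap, and it is exactly the one you flag yourself: your ``second inequality'' is derived from the claim that termination forces every node outside $L$ to interact directly with some node of $L$, which is false because of relaying, and the repair you sketch (a transmission-forest decomposition plus a convexity trade-off between ``entry interactions'' and ``intra-subtree cost'') is never carried out. As written, that repair is also not obviously sound: the intra-subtree cost of a subtree is not a quantity you can bound from below by a Chernoff argument ``restricted to the relevant node subset'', because the subtrees are chosen adaptively by the algorithm as a function of the realized interaction sequence, and you have not addressed your own parenthetical worry that dynamic $meetTime$-based decisions might beat any fixed scheme. So the skeleton (namely $\#L=O(T/n)$ w.h.p.\ from Lemma~\ref{lemme: nf(n) interactions allow f(n) nodes to interact with the sink}, combined with a routing cost of $\Omega(n^2\log n/\#L)$, giving $T=\Omega(n^{3/2}\sqrt{\log n})$) matches the paper, but the technical heart is missing.

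The paper closes this gap differently, and you may want to compare. Instead of decomposing into subtrees, it first neutralizes the knowledge: for an interaction between two nodes of $L^c$, both $meetTime$ values exceed $T(n)$ and are therefore independent of the interactions occurring before $T(n)$, so the choice of sender is no better than a uniformly random one; by Corollary~\ref{thm:optimality of gathering} the best possible intra-$L^c$ aggregation is then the Gathering algorithm. It then shows two quantitative facts: even under Gathering, $\omega(\sqrt{n/\log n})$ nodes of $L^c$ still own data after $T(n)$ interactions (otherwise the coupon-collector-type expectation $\sim n^2/(g(n)-R(n))$ would already exceed $T(n)$), and these surviving nodes cannot all meet the set $L$ of size $o(\sqrt{n\log n})$ within $T(n)$ interactions, since that requires $\sim \frac{n^2}{2\#L}\log(h(n))=\omega(n^{3/2}\sqrt{\log n})$ interactions. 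This ``reduce relays to Gathering, then count survivors'' route is what makes the trade-off rigorous; without it, or without a complete version of your convexity argument, your proof does not establish the lower bound.
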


\begin{proof}
    For the sake of contradiction, we suppose the existence of an algorithm $A\in \DODA(meetTime)$ that terminates in $T(n)$ interactions with high probability, with $T(n) = o\left(n^{3/2}\sqrt{\log(n)}\right)$. Without loss of generality we can suppose that $A$ does nothing after $T(n)$ interactions. Indeed, the algorithm $A'$ that executes $A$ up to $T(n)$ and does nothing afterward has the same upper bound (since the bound holds with high probability).

    Let $L$ be the set of nodes that interact directly with the sink during the first $T(n)$ interactions. Let $L^c$ be its complementary in $V\backslash\{s\}$. We know from Lemma~\ref{lemme: nf(n) interactions allow f(n) nodes to interact with the sink} that $\#L=O(T(n)/n)=o\left(\sqrt{n\log(n)}\right)$ \whp
    
    We can show that $T(n)$ interactions are not sufficient for all the nodes in $L^c$ to interacts with nodes in $L$. If nodes in $L^c$ want to send their data to the sink, some data must be aggregated among nodes in $L^c$, then the remaining nodes in $L^c$ that still own data must interact with a node in $L$ before $T(n)$ interactions (this is not even sufficient to perform the DODA, but is enough to reach a contradiction). 
    
    When two nodes in $L^c$ interact, their meetTime (that are greater than $T(n)$) and the previous interactions are independent with the future interactions occurring before $T(n)$. This implies that when two nodes in $L^c$ interact, using this information to decide which node transmits is the same as choosing the sender randomly. From corollary \ref{thm:optimality of gathering}, this implies that the optimal algorithm to aggregate data in $L^c$ is the Gathering algorithm. 
    
    Now, we show that, even after the nodes in $L^c$ use the Gathering algorithm, there is with high probability at least one node in $L^c$ that still owns data and that does not interact with any node in $L$. This node prevents the termination of the algorithm before $T(n)$ interactions with high probability, which is a contradiction.

    Formally, we have the following lemmas.
    \begin{lemma}
        Let $g(n)$ be the number of nodes in $L^c$. After using the Gathering algorithm during $T(n)$ interactions, the number of nodes in $L^c$ that still own data is in $\omega(\sqrt{n/\log(n)})$ with high probability.
    \end{lemma}
    \begin{proof}
        Let $X$ be the number of interactions needed for $R(n)$ nodes in $L^c$ to transmit their data.
        For the sake of contradiction, we suppose that 
        \begin{equation}\label{eq:g-R=o(g)}
            g(n) - R(n) = O(\sqrt{n/\log(n)}) = o(g(n))
        \end{equation}
        and show that $X$ is greater than $T(n)$ \whp 
        The probability of the $i$-th interaction between two nodes in $L^c$ that own data, after the $(i-1)$-th interaction already occurred, is $\frac{(g(n) - i)(g(n) - i - 1)}{n(n-1)}$. Thus we have:
        \begin{align*}
            E(X) &= \sum_{i=0}^{R(n) - 1}\frac{n(n-1)}{(g(n) - i)(g(n) - i - 1)}\\
                 &= n(n-1)\sum_{i=g(n) - R(n) + 1}^{g(n)}\frac{1}{i(i - 1)}\\
                 &= n(n-1)\left(\frac{1}{g(n) - R(n) + 1} - \frac{1}{g(n)}\right)\\
                 &= n(n-1)\frac{R(n)}{g(n)(g(n) - R(n))} 
        \end{align*}        
        From equation (\ref{eq:g-R=o(g)}) we deduce that $g\sim R$ and we have:
        \begin{align*}
            E(X)         &\sim n^2\frac{1}{g(n) - R(n)} 
        \end{align*}  
        which implies 
        \[
            E(X) = \Omega\left(n^2\sqrt{\frac{\log(n)}{n}}\right) = \Omega\left(n^{3/2}\sqrt{\log(n)}\right).
        \] As in the previous proofs, the expectation is reached with high probability. This contradicts the fact $T(n)=o(n^{3/2}\sqrt{\log(n)})$        
    \end{proof}
    \begin{lemma}
        Let $H\subset L^c$ be the nodes in $L^c$ that still own data after the gathering. With high probability, $T(n)$ interactions are not sufficient for all the nodes in $H$ to interact with nodes in $L$.
    \end{lemma}
    \begin{proof}
        We know from the previous lemma that the number of nodes in $H$ is $h(n) = \omega(\sqrt{n/\log(n)})$.
        Let $X$ be the random variable that equals the number of interactions needed for the nodes in $H$ to interact with the nodes in $L$. We show that $X$ is in $\omega(n^{3/2}\sqrt{\log(n)})$ with high probability. Indeed, the probability of the $i$-th interaction between a node in $H$ that owns data, after the $(i-1)$-th interaction already occurred, is $\frac{2f(n)(h(n) - i)}{n(n-1)}$, where $f(n)=\#L$. Thus we have:
        \begin{align*}
            E(X) &= \sum_{i=0}^{h(n) - 1}\frac{n(n-1)}{2f(n)(h(n) - i)}\\
                 &= \frac{n(n-1)}{2f(n)}\sum_{i=1}^{h(n)}\frac{1}{i}
                 \sim \frac{n^2}{2f(n)}\log(h(n))
        \end{align*}        
        But since $f(n) = o\left(\sqrt{n\log(n)}\right)$, we have
        \begin{align*}
            E(X) &= \omega\left(\frac{n^{3/2}}{\sqrt{\log(n)}}\log(h(n))\right) \\
            &= \omega\left( \frac{n^{3/2}}{\sqrt{\log(n)}}\log(n/log(n)) \right) \\
            &= \omega\left( n^{3/2}\sqrt{\log(n)} \right)
        \end{align*}
        Again the bound holds with high probability. This implies that, with high probability, $T(n)=o( n^{3/2}\sqrt{\log(n)} )$  interactions are not sufficient for all the nodes in $H$ to interact with nodes in $L$.
    \end{proof}
    \textit{End of the proof of theorem \ref{thm:optimality of greedy waiting}}.
    We have shown that $T(n)$ interactions are not sufficient for the nodes in $L^c$ to transmit their data (directly or indirectly) to the nodes in $L$. Indeed, we have shown that the nodes in $L^c$ can apply the gathering algorithm so that $\omega(\sqrt{n\log(n))}$ nodes in $L^c$ still own data with high probability. But, with high probability, one of the $\omega(\sqrt{n\log(n)})$ remaining nodes does not interact with a node in $L$ in $T(n)$ interactions. This implies that, with high probability, at least one node cannot send its data to the sink in $T(n)$ interactions and an algorithm $A$ with such a bound $T$ does not exist.    
\end{proof}


\section{Concluding remarks}

We defined and investigated the complexity of the distributed online data aggregation problem in dynamic graphs where interactions are controlled by an adversary. We obtained various tight complexity results for different adversaries and node knowledge, that open several scientific challenges:
\begin{enumerate}
\item What knowledge has a real impact on the lower bounds or algorithm efficiency ?
\item Can similar optimal algorithms be obtained with fixed memory or limited computational power ?
  \item Can randomized adversaries that use a non-uniform probabilistic distribution alter significantly the bounds presented here in the same way as in the work by Yamauchi \etal~\cite{YTKY12c}~?
  \end{enumerate}


\bibliographystyle{plain}
\bibliography{oda}

\end{document}